\documentclass[10pt,envcountsame]{llncs}

\usepackage[utf8]{inputenc}
\usepackage{amsmath}
\usepackage{amssymb}
\usepackage{rotating}
\usepackage{enumerate}
\usepackage{xspace}
\usepackage{amsfonts}
\usepackage{mathrsfs}
\usepackage[all,2cell]{xy}
\CompileMatrices
\UseAllTwocells
\SilentMatrices
\usepackage{cite}
\usepackage{float}
\usepackage{fancybox}
\usepackage{a4wide}
\usepackage{hyperref}
\usepackage{stmaryrd}
\usepackage{caption}
\usepackage{subcaption}





\newcommand{\Defeq}
 {\stackrel{\mathrm{def}}{=}}

\newcommand{\xRightarrow}[1]{\stackrel{#1}{\Rightarrow}}

\newcommand{\stran}{\raise1pt\hbox{$\centerdot$}}

\newcommand{\rring}[1]{\ensuremath{\mathbb{#1}}}

\newcommand{\N}{\rring{N}}

\newcommand{\ladj}[2]{\ar@/^/[#1]^-{#2} \ar@{}[#1]|-%

{\ifthenelse{\equal{#1}{r}}{\bot}{%

{\ifthenelse{\equal{#1}{rr}}{\bot}{%

{\ifthenelse{\equal{#1}{l}}{\top}{%

{\ifthenelse{\equal{#1}{u}}{\dashv}{%

{\vdash}}}}}}}}}}

\newcommand{\radj}[2]{\ar@/^/[#1]^-{#2}}

\newcommand{\radjff}[2]{\ar@{_{(}->}[#1]^{#2}}

\newcommand{\pullbacktop}[4]{%

{#1} \ar@/_/[ddr]_{#4} \ar@/^/[drr]^{#2}%

\ar@{.>}[dr]|-{#3} \\}

\newcounter{ncomm}

\newcommand{\ltsred}[1]
{ \setbox0=\hbox{$\ {}^{#1}\ $}
  \setbox1=\hbox{$\longrightarrow$}
  \loop\setbox1=\hbox{$-$\kern-0.3em\unhbox1}\ifdim\wd1<\wd0\repeat
  \hbox{$\ \ \mathop{\box1}\limits^{#1}\ \ $}
}

\newcommand{\arx}[2]{\!\xymatrix@=15pt{\ar[r]^{{#1}}_{{#2}}&}\!}

\newlength{\mylength}

{\setlength{\fboxsep}{15pt} 
\setlength{\mylength}{\linewidth}%
\addtolength{\mylength}{-2\fboxsep}%
\addtolength{\mylength}{-2\fboxrule}%
\Sbox 
\minipage{\mylength}%
\setlength{\abovedisplayskip}{0pt}%
\setlength{\belowdisplayskip}{0pt}%
}%
{\endminipage\endSbox 
\[\fbox{\TheSbox}\]}

\newcommand{\bnfSep}{\ |\ }
\newcommand{\bnfEq}{\ ::=\ }
\newcommand{\comp}{\mathrel{;}}
\newcommand{\dtrans}[2]{\lower.2em\hbox{$\xrightarrow{#1/#2}$}}
\newcommand{\dTrans}[2]{\lower.2em\hbox{$\xRightarrow{#1/#2}$}}
\newcommand{\epstrans}[2]{\lower.2em\hbox{$\xrightarrow{\epsilon_{#1,#2}}$}}

\newcommand{\pre}[1]{{^\circ{#1}}}
\newcommand{\post}[1]{{#1^\circ}}
\newcommand{\preandpost}[1]{{^\circ{#1^\circ}}}
\newcommand{\source}[1]{{^\bullet{#1}}}
\newcommand{\target}[1]{{#1^\bullet}}
\newcommand{\minsynch}[2]{sync_{min}(#1,#2)}
\newcommand{\marking}[2]{{[#1]}_{#2}}
\newcommand{\NFA}[3]{\mathsf{NFA}(#1,\,#2,\,#3)}
\newcommand{\semanticsOf}[1]{\llbracket{#1}\rrbracket}
\newcommand{\epsmin}[1]{{\epsilon\mathsf{min}(#1)}}
\newcommand{\epsminpr}[1]{{\mathsf{trans}(#1)}}
\newcommand{\epsclose}[1]{{\epsilon\mathsf{cl}(#1)}}
\newcommand{\mrk}[2]{\mathsf{mrk}(#1)_{#2}}
\newcommand{\figref}[1]{Fig.~\ref{#1}}

\title{Reachability via Compositionality in Petri nets}
\author{Pawe{\l} Soboci{\'n}ski \and Owen Stephens}
\institute{ECS, University of Southampton, UK}
\begin{document}
\maketitle
\begin{abstract}
We introduce a novel technique for checking reachability in Petri nets that
relies on a recently introduced compositional algebra of nets. We prove that
the technique is correct, and discuss our implementation. We report promising
experimental results on some well-known examples.
\end{abstract}

\section*{Introduction}
\label{sec:intro}

We introduce a novel technique for checking reachability in 1-bounded Petri nets. Our approach relies on a structural decomposition of
nets, using the algebra of \emph{nets with boundaries} developed
in~\cite{Soboci'nski2010,Bruni2011,Bruni2012} and the algebra of labelled transition systems (LTS) originally developed in~\cite{Katis1997a}. After explaining the intuitions and some motivating examples, we prove the technique correct, discuss our implementation and report on experimental results.

Many asynchronous systems are regular in their structure, in the sense that
they can be considered as a suitable composition of several identical,
communicating components. In many such systems, the communication between
individual components can be characterised using relatively small (w.r.t. the
size of the global state space) amounts of information, and as a consequence,
the reachability of a particular global state can be checked locally. The
algebra of nets with boundaries allows us to capture precisely how separate
“component nets” communicate with each other.

\begin{figure}
\[
\includegraphics[height=2.5cm]{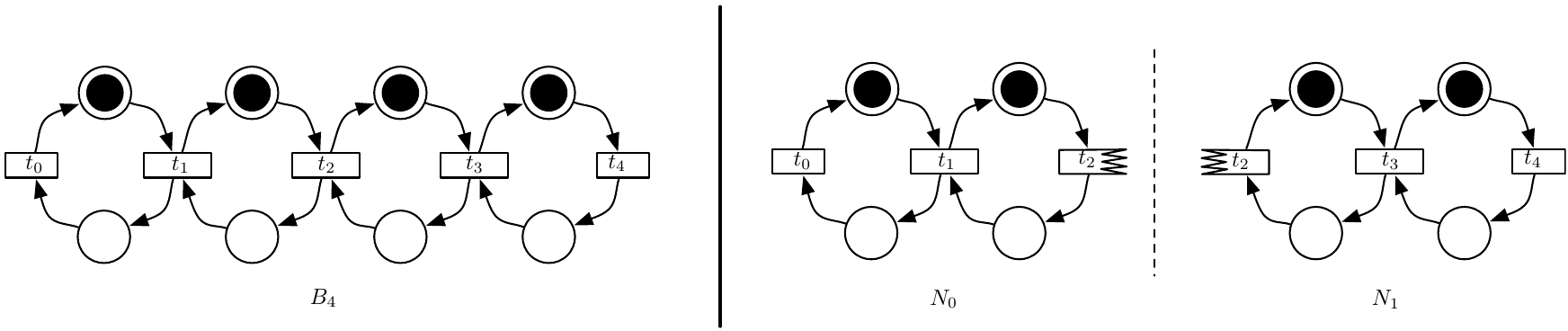}
\]
\caption{The net $B_4$ and a “cut” along its transition $t_2$.\label{fig:bookshelf}}
\end{figure}

To illustrate the ideas that underlie our approach we introduce the simple,
well-known\footnote{For example, see~\cite[Fig.\ 6]{Esparza2002}.}
bounded buffer net, $B_n$, illustrated in the left part of
\figref{fig:bookshelf}. We wish to check whether the “opposite” marking is
reachable—that is, the places in the lower row are to be marked and the places
in the upper row are to be unmarked. Taking a global view, a simple calculation
confirms that the length of the firing sequence necessary to reach the desired
marking is quadratic in $n$ (see~\figref{fig:decompositionTimes}). We will,
instead, check for reachability locally, component-wise, so imagine that the
net is “split” into two nets $N_0$ and $N_1$, sharing the transition $t_2$, as
in the right part of \figref{fig:bookshelf}.

\begin{remark}\label{rmk:policies}
Observe (1) that $N_0$ and $N_1$ can proceed independently to reach the desired
local marking, only “synchronising” on $t_2$ and (2) the “synchronisation
policy” is quite simple to describe. Indeed, $N_1$ can fire its local copy of
$t_2$ an arbitrary number (including 0) of times during a successful
computation; $N_0$ can reach its desired marking after firing its copy of $t_2$
at least twice, after which $t_2$ can be fired an arbitrary additional number
of times. These two “policies” are clearly compatible, meaning that the entire
net can reach its global desired marking.
\end{remark}

To make the above intuitions precise, we recall the algebra of nets introduced
in~\cite{Soboci'nski2010}. We will use a non-standard graphical representation
of nets, more suited for illustrating the operations of the algebra: $B_4$ is
rendered with the alternative graphical notation in the left-most diagram of
\figref{fig:bookshelfalternative}. Transitions are represented using
undirected links and each link can be connected to an arbitrary number of
ports. Each place has two ports: one for incoming transitions, illustrated with
a triangle pointing into the place, and one for outgoing transitions,
illustrated with a triangle pointing out of the place.  Thus the pre-set of a
transition is the set of places to which it is connected via their outgoing
port, and its post-set is the set of places to which it is connected via their
incoming port. Transitions can also be connected to \emph{boundary ports},
which serve as an interface between nets with boundaries. The net $B_4$ can be expressed as the composition $\top \comp b_1\comp b_1\comp
b_1\comp b_1 \comp \bot$; the individual components $\top$, $b_1$ and $\bot$
are illustrated in \figref{fig:bookshelfalternative}.
The operation ‘$\comp$’ that composes two nets along a compatible, common boundary is defined formally in \S\ref{sec:composition}.
\begin{figure}
\[
\includegraphics[height=2cm]{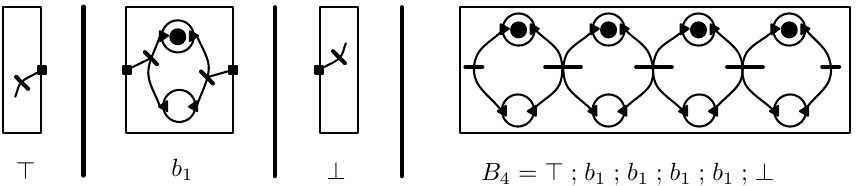}
\]
\caption{Obtaining $B_4$ as a composition of nets $\top$, $b_1$ and $\bot$.
\label{fig:bookshelfalternative}}
\end{figure}

Each component net with boundaries, together with its initial marking and
desired local marking, can be translated to a non-deterministic finite
automaton (NFA), with states being the reachable markings, and transitions the
boundary interactions observed when net transitions fire. The initial state is
the initial marking and the final state is the desired marking. We
illustrate\footnote{All illustrations of automata were generated with
GraphViz~(\url{http://www.graphviz.org}). For space-efficiency, transitions are
annotated with sets: $\{x,\,y\}$, representing the existence of two
transitions, labelled respectively $x$ and $y$. We use $*$ in the labels as
shorthand for any choice of $0$ and $1$.} this translation in
\figref{fig:piecestonfas}. For example, in the translation of $b_1$, state $0$
corresponds to the initial marking and state $1$ to the desired complementary
marking. The labels of transitions are, in general, pairs of binary strings
$\alpha$ and $\beta$, written $\alpha / \beta$, representing interaction on the
left ($\alpha$) and the right ($\beta$) boundaries. The concept of “interaction
on a boundary” is important and we will explain it further below. To guarantee
compositionality, we must use an underlying \emph{step} firing semantics of
nets, i.e. a transition in the NFA witnesses the firing of a (possibly empty)
set of independent transitions within the component net. Returning to the
translation of $b_1$: the $0/0$ labelled NFA-transitions in state $0$ and $1$
witness the possibility of no behaviour (i.e. the empty set of net-transitions
firing) with the $0/0$ label signifying that no net-transitions connected to
either boundary were fired. The NFA-transition $0\dtrans{0}{1}1$ witnesses that
the right hand side net-transition has fired and produced the desired marking.
The fact that the fired transition is connected to the port on the right
boundary is recorded by $1$ in the transition label. The remaining
NFA-transition is symmetric.

\begin{figure}
\[
\includegraphics[height=2.5cm]{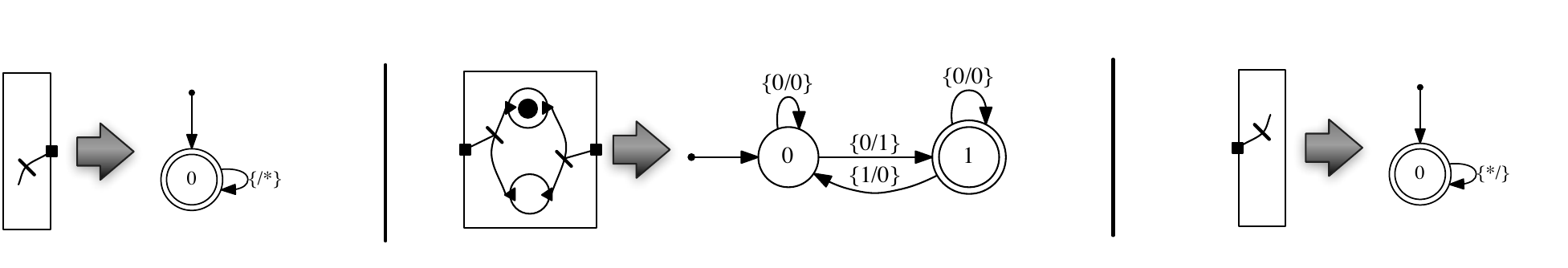}
\]
\caption{Translation to NFAs.\label{fig:piecestonfas}}
\end{figure}

The principle of compositionality, proved in
Theorem~\ref{thm:compositionality}, is illustrated in
\figref{fig:compositionality}: given two $b_1$ nets, we can obtain the NFA
representing their (composite) behaviour in two ways: 1) compose two $b_1$
nets to form the net $b_1\comp b_1$, and then generate its NFA, or
equivalently, 2) generate the two (identical) NFAs for each $b_1$ and compose
them, using a variant\footnote{$(a,b)\dtrans{\alpha}{\beta}(a’,b’)$ iff
$\exists \gamma.\; a\dtrans{\alpha}{\gamma}a’ \mathrel{\wedge}
b\dtrans{\gamma}{\beta}b’$.} of the  product construction.
Compositionality ensures that the diagram commutes, in other words, the
global behaviour of the composition of the two nets is completely determined by
the behaviour of the individual nets, when synchronised along their common
boundary.

%
\begin{figure}
\vspace{-1em}
\[
\includegraphics[height=5cm]{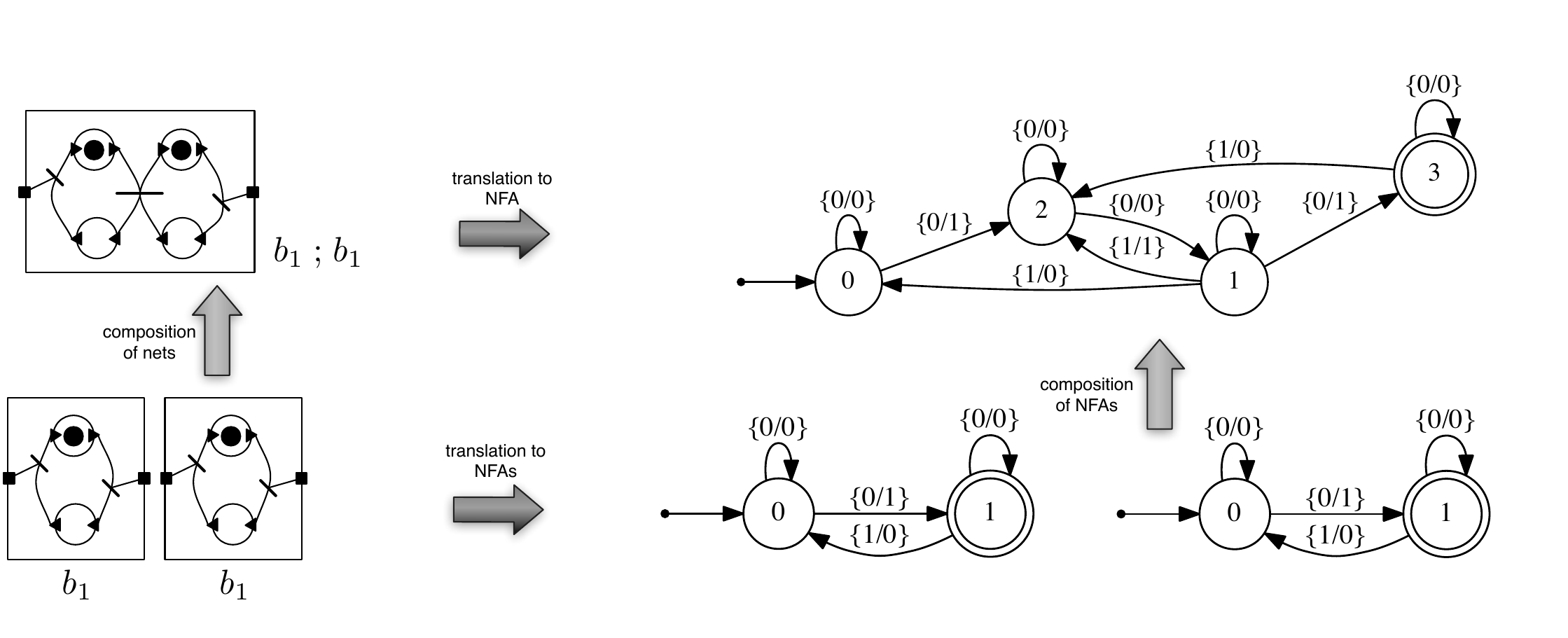}
\]
\caption{\vspace{-1em}Compositionality at work.\label{fig:compositionality}}
\end{figure}
The NFA generated for $b_n=b_1\comp \dots\comp
b_1$ ($b_1$ composed $n$ times) has $2^n$ states, thus directly computing the
automaton for $b_n$ is feasible only for small $n$.
Fortunately, to generate a correct NFA of the composite net, it is sufficient
to capture how each component net must interact on its boundaries in order to
reach its local desired marking---its “synchronisation policy”.
To do this, we close the NFA with
respect to internal ($\epsilon$-) moves---those transitions labelled solely
with $0$s, signifying no interaction at the boundaries---to obtain an automaton
with the same states, but with transitions being paths
$a(\dtrans{0}{0})^*\dtrans{a}{b}(\dtrans{0}{0})^* b$. We then minimise the new
NFA, obtaining a deterministic automaton (DFA), with an “error” state
that is reached whenever an illegal (i.e. not in the behaviour of the
underlying net) interaction is observed on the boundaries. This DFA
minimally represents the entire behaviour (assuming that an observer may only
observe traces) of the net, w.r.t. interactions on its boundaries.

Note that the states of the NFA obtained from a net are 1-1 with the reachable markings of the underlying net; in general, this is not the case after $\epsilon$-closure and minimisation: the states of the minimal DFA merely capture the ``protocol'' the net must follow when interacting with its environment, in order to arrive at the desired marking.
Indeed, for $b_n$, the resulting minimal DFA has $n+2$ states. Of course, computing the minimisation of an NFA can be very expensive---in the worse case, triple exponential in the number of places of the original net---our strategy is thus roughly to decompose nets as far as possible (thereby only minimising small
NFAs) and take advantage of any regular, repetitive structure in the net, via
memoisation. As discussed, compositionality guarantees correctness---the fact
that the square in \figref{fig:minimisation}, illustrating the process for
$B_4$, commutes is a consequence of Theorems~\ref{thm:weakCompositionality}
and~\ref{thm:correctness}.
\begin{figure}
\[
\includegraphics[height=4.5cm]{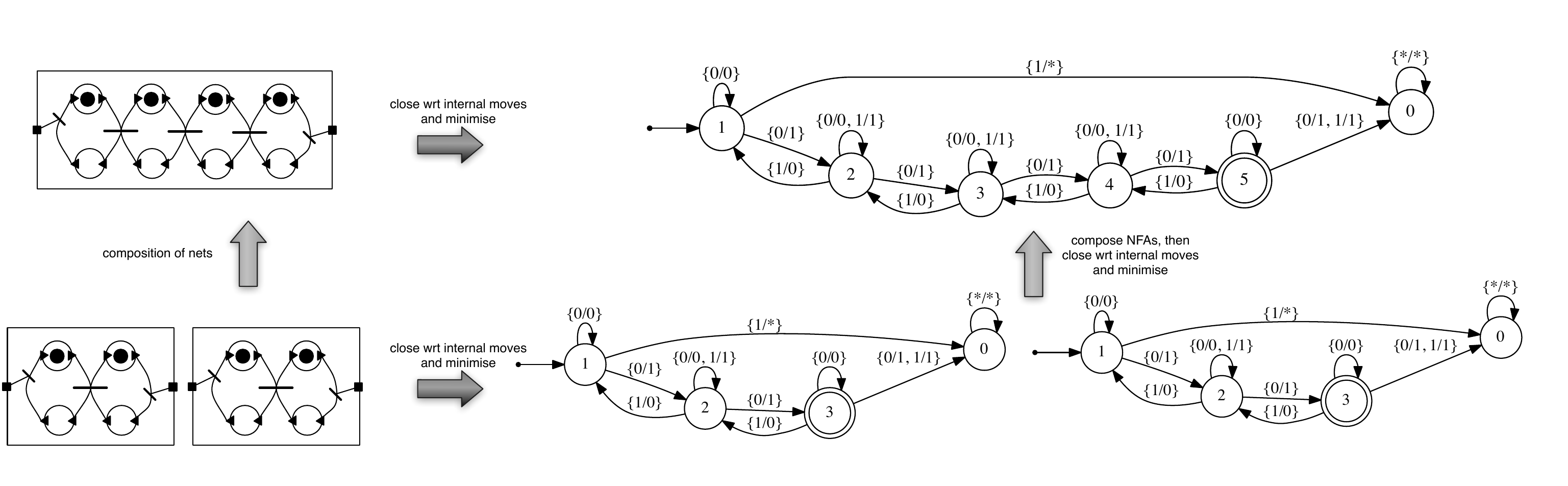}
\]
\caption{Minimising $B_4$, compositionally.\label{fig:minimisation}}
\end{figure}

The applicability of our approach depends on finding ``good'' decompositions of
nets. For $B_n\Defeq \top\comp b_n\comp \bot$, there are many potential decompositions: the
optimal\footnote{All experiments were run on an Intel i7-2600 3.40GHz CPU,
16GB of RAM, running 64-bit Ubuntu Linux.} is the 1st decomposition in
\figref{fig:decompositions}, which corresponds to the algebra term $(\top \comp
(b_1\comp (\dots \comp (b_1\comp \bot)\dots )$. Indeed, the composition of
$b_1$ and $\bot$ minimises to the trivial accepting automaton;
\figref{fig:translation} contains illustrative translation steps of the
different decompositions of $B_4$. In \textit{(i)} the composition of the
automaton for $b_1$ is composed with the automaton for $\bot$: after
minimisation we again obtain the automaton for $\bot$. Thus the procedure
reaches a fixed point after the first step, as illustrated in \textit{(ii)}.
This fact formally captures the intuition about $N_1$ given in
Remark~\ref{rmk:policies}. For this decomposition, memoisation guarantees that
the composition and minimisation is performed only once. In particular, this
means that checking reachability for $B_n$, given this decomposition, is
\emph{linear} in $n$.  However, other decompositions do not lead to such good
performance. In particular, consider the 2nd decomposition of
\figref{fig:decompositions}; here, memoisation does not help (we obtain a
different NFA composition after each step) and we must perform minimisation
after each composition, as illustrated in steps \textit{(iii)} and
\textit{(iv)} of \figref{fig:decompositions}.
\begin{figure}
\[
\includegraphics[height=3cm]{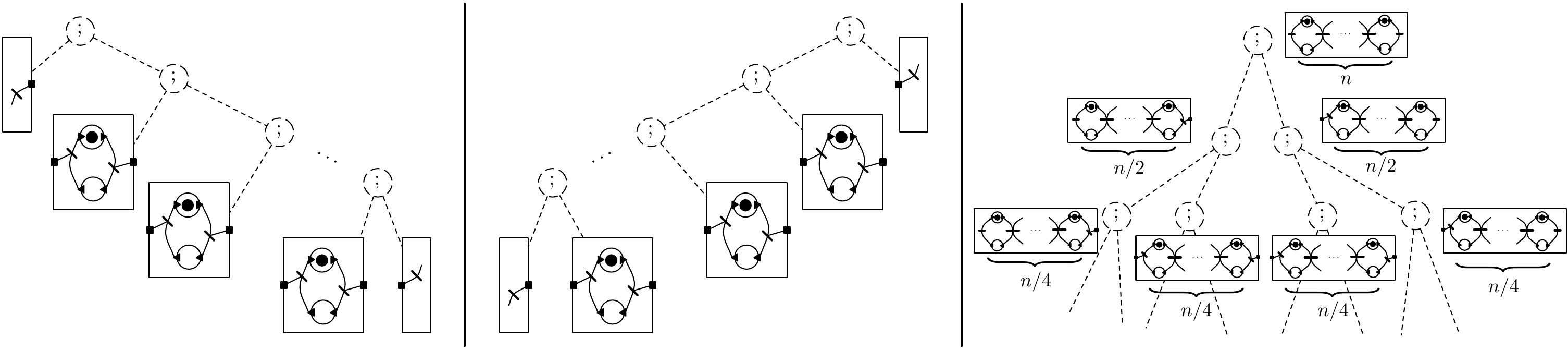}
\]
\caption{Three decompositions of $B_n$, to which we refer as, respectively,
right, left and balanced.\label{fig:decompositions}}
\end{figure}

Our automated approach to deconstructing $B_n$ (discussed in
\S\ref{sec:decomposition}) produces the 3rd (balanced) decomposition of
\figref{fig:decompositions}. In this particular case we decompose by
identifying a transition that connects two components of similar size. This
decomposition, while not optimal, allows frequent use of memoisation, reducing
the amount of computation. A table of running times for the construction of a
minimal DFA for $B_n$, following the three decompositions of
\figref{fig:decompositions}, is given in
\figref{fig:decompositionTimes}.
\begin{figure}
\[
\includegraphics[height=9cm]{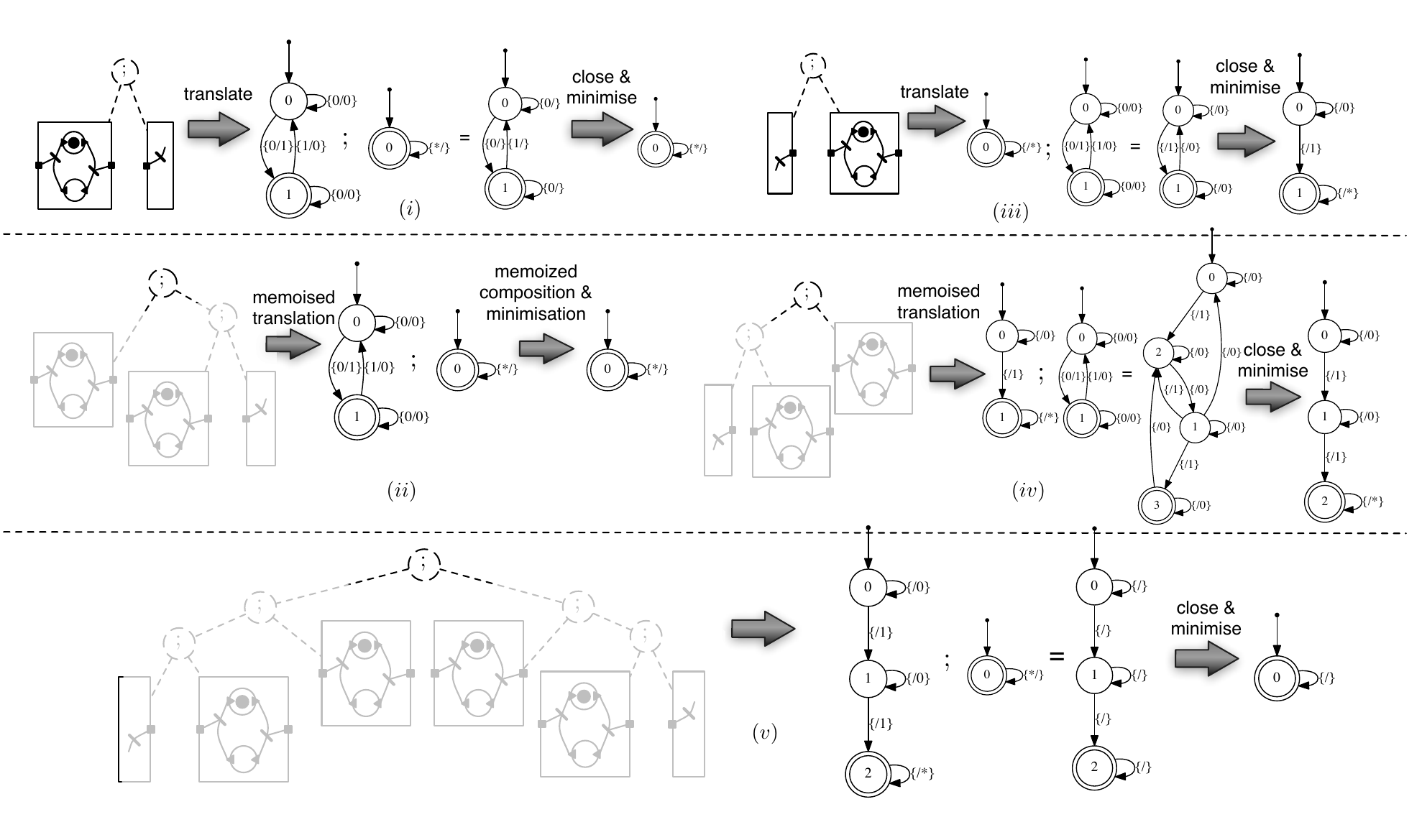}
\]
\caption{Translation of the decompositions in
\figref{fig:decompositions}. (i),(ii) initial steps using the right
decomposition; (iii), (iv) initial steps using the left decomposition; (v)
final step using the balanced decomposition of $B_4$. \label{fig:translation}}
\end{figure}
\begin{figure}
\centering
\begin{subfigure}[t]{0.45\linewidth}
\begin{tabular}{| c || r | r | r | r | }
\hline
& min \# & \multicolumn{3}{|c|}{Time [s]} \\
\cline{3-5}
n     & firing sequence & right & left   & balanced \\ \hline
16    & 136             & 0.000 & 0.020  & 0.008    \\
32    & 528             & 0.000 & 0.140  & 0.024    \\
64    & 2080            & 0.000 & 1.108  & 0.172    \\
128   & 8256            & 0.000 & 12.597 & 2.954    \\
256   & 32896           & 0.000 & -      & 74.737   \\
65536 & 2147516416      & 0.228 & -      & -        \\
\hline
\end{tabular}

\caption{Time to construct minimal DFA for $B_n$ with the three decompositions
illustrated in \figref{fig:decompositions}.\label{fig:decompositionTimes}}
\end{subfigure}%
\hspace{0.5cm}
\begin{subfigure}[t]{0.45\linewidth}
\begin{tabular}{| c || r | r | r | }
\hline
n  & \# places in net & min \# firing sequence & time [s]\\ \hline
6  & 63               & 120                    & 0.028  \\
8  & 255              & 502                    & 0.088  \\
10 & 1023             & 2036                   & 0.312  \\
12 & 4095             & 8178                   & 1.332  \\
14 & 16383            & 32752                  & 5.864  \\
16 & 65535            & 131054                 & 25.014 \\
\hline
\end{tabular}

\caption{Time to construct minimal DFA for $T_n$, using the decomposition described in \figref{fig:treenet}. \label{fig:treeDecompositionTimes}\\}
\end{subfigure}
\caption{NFA construction times for $B_n$ and $T_n$.\label{fig:bnandtnexperiments}}
\end{figure}
%

We have illustrated how the operation `$\comp$' allows
decomposition of the net $B_n$ in order to exhibit its the regular structure.
We will briefly consider a second example that illustrates the use of the
second operation of the algebra, ‘$\otimes$’.
Consider the net in \figref{fig:treenet},
where we want to check whether all the places can be
marked; N.B this net is not 1-safe, but 1-boundedness means that a
transition is blocked if there is a token present in its post-set.
Our automated procedure constructs the decomposition illustrated in the right
part of \figref{fig:treenet}.
In \figref{fig:treeTranslation} we illustrate the steps involved in calculating
the minimal DFA for $T_3$, and give a table of experimental results in
\figref{fig:treeDecompositionTimes}.
\begin{figure}
\[
\includegraphics[height=3cm]{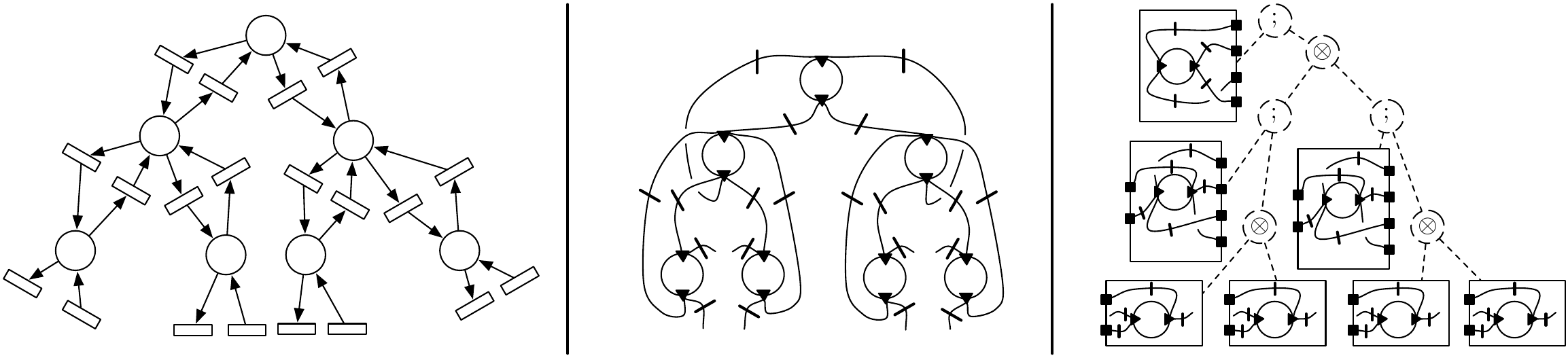}
\]
\caption{The net $T_3$, in traditional and alternative graphical
notation, and its decomposition. \label{fig:treenet}}
\end{figure}

\paragraph{Structure of the paper.}
In \S\ref{sec:theory} we study the foundations of our technique and prove it
correct. In \S\ref{sec:implementation} we discuss our implementation and give
additional experimental results. Connections with related work are in
\S\ref{sec:related}, and we conclude with directions for future research in
\S\ref{sec:discussion}. Due to space constraints, proofs and non-essential
figures have been moved to the appendix.

\section{Nets with boundaries}

\label{sec:theory}
In this section we give the theoretical underpinnings of our technique, harnessing the compositionality of the algebra of nets with boundaries in order to prove its correctness.

\paragraph{Notational conventions.}
For $n\in\N$ let $\underline{n}=\{0,1,\dots,n-1\}$. We write $2^X$ for the powerset of $X$. We write $X+Y$ for the set $\{(x,0)\;|\;x\in X\}\cup\{(y,1)\;|\;y\in Y\}$. Given $\mathcal{U}\subseteq 2^X$ and $\mathcal{V}\subseteq 2^Y$ we write $\mathcal{U}\uplus\mathcal{V}=\{\,U + V\;|\;U\in\mathcal{U},\,V\in\mathcal{V}\,\}\subseteq 2^{X+Y}$. We identify binary strings $\alpha=\alpha_0\alpha_1\dots\alpha_{k-1}$ of length $k$ with subsets of $\underline{k}$ in the obvious way: $\alpha_i=1$ iff $i\in \alpha$.

\begin{definition}
A \emph{net with boundaries} $N:k\to l$ is $(P, T,  k, l, \pre{-}, \post{-},\source{-},\target{-})$ where:
\begin{itemize}
\item[-] $P$ is the set of places, $T$ is the set of transitions
\item[-] $k,l\in\N$ are, respectively, the left and the right boundaries
\item[-] $\pre{-},\post{-}: T \to 2^P$ give, respectively, the pre- and post-sets of each transition
\item[-] $\source{-}:T\to 2^{\underline{k}}$ and $\target{-}:T\to 2^{\underline{l}}$ connect each transition to, resp., the left and the right boundary.
\end{itemize}
Additionally, we assume\footnote{That is, at most one transition can be connected to any place on the boundary. This assumption allows us to simplify the definition of composition of nets; for the more general case see~\cite{Bruni2012}.} that for any $t\neq t'\in T$, $\source{t}\cap\source{t'}=\varnothing$ and $\target{t}\cap\target{t'}=\varnothing$. Ordinary Petri nets can be considered as nets $N:0\to 0$ with no boundaries.\end{definition}

We must use step semantics of nets instead of the more common interleaving semantics to guarantee compositionality; we will illustrate this in Remark~\ref{rmk:step}. Let $\preandpost{t}\Defeq \pre{t}\cup \post{t}$. Transitions $t\neq t'\in T$ are said to be \emph{independent} when $\preandpost{t}\cap\preandpost{t'}=\varnothing$. A set $U\subseteq T$ is said to be \emph{mutually independent} (MI) when for all $u\neq u'\in U$, $u$ and $u'$ are independent. For sets of transitions $U\subseteq T$ we will abuse notation and write $\pre{U}=\bigcup_{u\in U}\pre{u}$, and similarly for $\post{U}$, $\source{U}$ and $\target{U}$.

Each net with boundaries $N:k\to l$ determines an LTS whose transitions witness the step semantics of the underlying net, originally described by Katis et al~\cite{Katis1997b}. For the 1-bounded case, the labels are pairs of binary strings of length $k$ and $l$, respectively.  The states are markings of $N$, denoted by $\marking{N}{X}$, where $X\subseteq P$. The transition relation is defined:
\begin{equation*}
\marking{N}{X} \dtrans{\alpha}{\beta} \marking{N}{X'} \Leftrightarrow\exists\text{ MI }U\subseteq T, \pre{U}\subseteq X,\,\post{U}\cap X=\varnothing,\, X'= (X\backslash \pre{U})\cup \post{U},\, \source{U}=\alpha,\, \target{U}=\beta
\end{equation*}

\subsection{Composition of nets with boundaries}

\label{sec:composition}
Suppose that $N:k\to l$ and $M:l\to m$ are nets with boundaries. A \emph{synchronisation} is a pair $(U,V)$ where $U\subseteq T_N$ and $V\subseteq T_M$  are MI sets of transitions, with $\target{U}=\source{V}$. Given synchronisations $(U,V)$ and $(U',V')$, we say $(U,V)\subseteq (U',V')$ exactly when $U\subseteq U'$ and $V\subseteq V'$. The \emph{trivial synchronisation} is $(\varnothing,\varnothing)$. A synchronisation $(U,V)$ is said to by \emph{minimal} when it is non trivial and, for all synchronisations $(U',V')$, if $(U',V')\subseteq (U,V)$ then $(U',V')$ is trivial. The set of minimal synchronisations of $N$ and $M$ is denoted $\minsynch{N}{M}$. The composed net $N\comp M: k\to m$ has:
\begin{itemize}
\item[-] $P_N + P_M$ as its set of places,
\item[-] $\minsynch{M}{N}$ as its set of transitions. Given $(U,V)\in \minsynch{M}{N}$ we let
$\pre{(U,V)}\Defeq\pre{U}\uplus\pre{V}$, $\post{(U,V)}\Defeq\post{U}\uplus\post{V}$, $\source{(U,V)}\Defeq\source{U}$ and $\target{(U,V)}\Defeq\target{V}$.
\end{itemize}
Examples of compositions of the net $B_n:0\to 0$ are given in Figs.~\ref{fig:bookshelfalternative} and~\ref{fig:decompositions}. Another example is given in \figref{fig:stepComposition}, with the resulting transition arising from the minimal synchronisation $(\{t_1,t_2\},\{t_3\})$.
\begin{figure}
\[
\includegraphics[height=1.8cm]{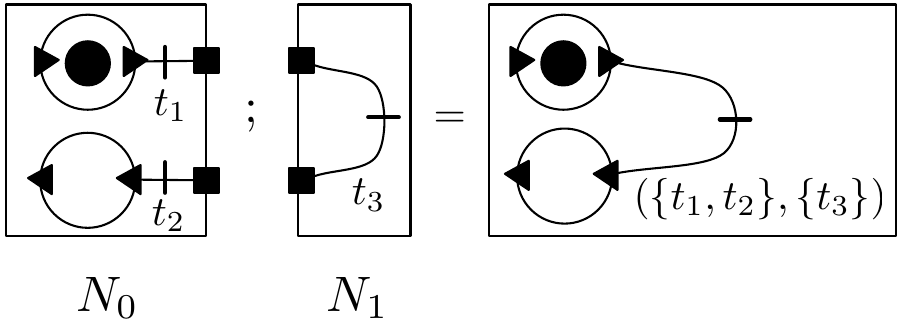}
\]
\caption{Illustration of composition of two nets.\label{fig:stepComposition}}
\end{figure}

\begin{remark}\label{rmk:step}
The example in \figref{fig:stepComposition} illustrates the necessity for step semantics in order for compositionality to hold. Indeed, in the composition $N_0;N_1$ we have the transition $\marking{N_0;N_1}{\{0\}}\dtrans{}{}\marking{N_0;N_1}{\{1\}}$ that witnesses the firing of its transition. This transition decomposes into  $\marking{N_0}{\{0\}}\dtrans{}{11}\marking{N_0}{\{1\}}$ and $\marking{N_1}{\varnothing}\dtrans{11}{}\marking{N_1}{\varnothing}$. The first of these requires the \emph{simultaneous} firing of $t_1$ and $t_2$ in $N_0$; thus if we had considered interleaving semantics then compositionality would fail in this example.
\end{remark}
The next result is a special case of \cite[Theorem~3.6]{Bruni2012}, where a more general algebra of nets is considered. We will rely on this to prove the correctness of our technique in Theorems~\ref{thm:weakCompositionality} and~\ref{thm:correctness}.
\begin{theorem}[Compositionality\label{thm:compositionality}]
Suppose that $N:k\to l$ and $M:l\to m$ are nets with boundaries.  The following holds for all $X,X'\subseteq P_N$, $Y,Y'\subseteq P_M$, $\alpha\in\{0,1\}^k$ and $\beta\in\{0,1\}^m$:
\[
\marking{N\comp M}{X\uplus Y} \dtrans{\alpha}{\beta} \marking{N\comp M}{X'\uplus Y'} \quad\Leftrightarrow\quad
\exists\gamma\in\{0,1\}^l.\;
\marking{N}{X}\dtrans{\alpha}{\gamma}\marking{N}{X'}
\ \wedge\
\marking{M}{Y}\dtrans{\gamma}{\beta}\marking{M}{Y'}
\]
\qed
\end{theorem}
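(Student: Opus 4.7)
The plan is to prove both directions by unpacking the LTS definition on each side and identifying the common combinatorial content: an MI set $\mathcal{S} \subseteq \minsynch{N}{M}$ of minimal synchronisations witnesses the composite step, whereas the right-hand side asks for a pair of MI sets $U \subseteq T_N$, $V \subseteq T_M$ with matching boundary $\target{U} = \source{V} = \gamma$. The bridge between the two formulations is the observation that every synchronisation decomposes uniquely into a disjoint union of minimal ones, and conversely that a disjoint union of minimal synchronisations is again a synchronisation.

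For the $(\Rightarrow)$ direction, suppose the composite step is witnessed by an MI set $\mathcal{S} = \{(U_i, V_i)\}$. I would define $U = \bigcup_i U_i$, $V = \bigcup_i V_i$, and $\gamma = \target{U}$. Then (i) MI of $U$ in $N$ (and of $V$ in $M$) follows from the internal MI guaranteed by each $(U_i, V_i) \in \minsynch{N}{M}$, combined with the cross-MI provided by $\mathcal{S}$: since $\pre{(U_i, V_i)} = \pre{U_i}\uplus\pre{V_i}$, disjointness in $P_N + P_M$ forces $\preandpost{U_i} \cap \preandpost{U_j} = \varnothing$ and similarly for the $V$-parts; (ii) the boundary equalities $\source{U} = \alpha$, $\target{V} = \beta$, and $\target{U} = \source{V} = \gamma$ follow by summing the component equalities; (iii) the enabling and effect conditions on $X, X'$ split cleanly along the $\uplus$-decomposition of the state, yielding the two component steps.

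For $(\Leftarrow)$, given MI $U$ and $V$ with $\target{U} = \source{V}$, I would form the bipartite graph whose vertices are $U \cup V$ and whose edges join $u$ and $v$ whenever $\target{u} \cap \source{v} \neq \varnothing$. The source/target disjointness assumption in the definition of nets with boundaries ensures that each shared port contributes exactly one such edge, so the graph is well-defined. Each connected component $(U_i, V_i)$ is a synchronisation, and is minimal because discarding any transition would leave a port in $\target{U_i} \setminus \source{V_i}$ or vice versa. The resulting family $\mathcal{S} = \{(U_i, V_i)\}$ is MI in $N \comp M$ by pairwise disjointness of the $U_i$ and $V_i$ together with the original MI of $U$ and $V$; its pre/post and boundary data reassemble via $\uplus$ to exactly the composite step.

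The main obstacle is the minimal-synchronisation decomposition in the $(\Leftarrow)$ direction—showing that the connected components really are \emph{minimal}, and conversely that \emph{every} minimal synchronisation arises this way in $(\Rightarrow)$, so that the reconstruction is exhaustive. Once that is in hand, the rest is bookkeeping over the $\uplus$-structure of the place set. Since the statement is a restricted case of \cite[Theorem 3.6]{Bruni2012}, I would appeal to the analogous argument there, specialising only where the simplifying single-port assumption of the present definition makes the graph-theoretic decomposition straightforward.
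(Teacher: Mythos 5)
Your proof is correct and follows the same overall architecture as the paper's: the $(\Rightarrow)$ direction is essentially identical (take the unions $U=\bigcup_i U_i$, $V=\bigcup_i V_i$ of the witnessing family of minimal synchronisations, check that mutual independence and the boundary/marking conditions split along the $\uplus$-structure), and the $(\Leftarrow)$ direction reduces, exactly as in the paper, to the lemma that every synchronisation $(U,V)$ decomposes into a mutually independent family of \emph{minimal} synchronisations. Where you differ is in how that lemma is established. The paper (Lemma~\ref{lem:minimalDecomposition}) argues by induction on $|U+V|$: extract one minimal sub-synchronisation $(U',V')$ and use the single-port assumption to show the remainder $(U\setminus U', V\setminus V')$ is again a synchronisation. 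You instead build the bipartite graph on $U\cup V$ with edges given by shared boundary ports and take connected components; minimality of a component follows because, by the single-port assumption, any non-trivial sub-synchronisation is closed under the neighbour relation and hence is a union of components. Your construction is more explicit and gives uniqueness of the decomposition for free (which the theorem does not actually need, so your worry about the reconstruction being ``exhaustive'' in the $(\Rightarrow)$ direction is moot --- there one simply takes unions of the given family); the paper's induction is shorter to write but only yields existence. One small point to tighten: your justification of minimality (``discarding any transition would leave a dangling port'') should be phrased as the closure argument above, since minimality quantifies over all sub-synchronisations, not just those obtained by removing a single transition; and you should note separately that a transition $u$ with $\target{u}=\varnothing$ forms its own singleton component $(\{u\},\varnothing)$, which is minimal.
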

The conclusion of Theorem~\ref{thm:compositionality} implies that, for instance, bisimilarity is a congruence w.r.t. to ‘$\comp$'. For the purposes of reachability checking, traces are sufficient.
\begin{corollary}\label{cor:traces}
There exists a trace $\marking{N\comp M}{X\uplus Y}\dtrans{\alpha_1}{\beta_1}\dots\dtrans{\alpha_p}{\beta_p} \marking{N\comp M}{X'\uplus Y'}$
iff there exist traces $\marking{N}{X}\dtrans{\alpha_1}{\gamma_1}\dots\dtrans{\alpha_p}{\gamma_p} \marking{N}{X'}$ and $\marking{M}{Y}\dtrans{\gamma_1}{\beta_1}\dots\dtrans{\gamma_p}{\beta_p} \marking{N}{X'}$.
\qed
\end{corollary}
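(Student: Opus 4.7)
The plan is to derive the corollary from Theorem~\ref{thm:compositionality} by a straightforward induction on the length $p$ of the trace, handling each direction separately but symmetrically. The only subtlety worth highlighting is how the intermediate markings in $N\comp M$ get factored into a part over $P_N$ and a part over $P_M$.

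For the base case $p=0$, both sides of the ``iff'' say $X=X'$ and $Y=Y'$, so nothing is to prove. For the inductive step, I would proceed as follows. Recall that the places of $N\comp M$ are $P_N+P_M$, so every subset $Z\subseteq P_N+P_M$ can be uniquely written as $Z=X_*\uplus Y_*$ with $X_*\subseteq P_N$ and $Y_*\subseteq P_M$. Thus any trace in $N\comp M$ of length $p\ge 1$ factors as
\[
\marking{N\comp M}{X\uplus Y}\dtrans{\alpha_1}{\beta_1}\marking{N\comp M}{X_1\uplus Y_1}\dtrans{\alpha_2}{\beta_2}\cdots\dtrans{\alpha_p}{\beta_p}\marking{N\comp M}{X'\uplus Y'}
\]
with the intermediate $(X_i,Y_i)$ uniquely determined by the markings visited.

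In the forward direction, apply Theorem~\ref{thm:compositionality} to the first step to obtain a label $\gamma_1\in\{0,1\}^l$ and transitions $\marking{N}{X}\dtrans{\alpha_1}{\gamma_1}\marking{N}{X_1}$ and $\marking{M}{Y}\dtrans{\gamma_1}{\beta_1}\marking{M}{Y_1}$. The remaining trace of length $p-1$ starts at $\marking{N\comp M}{X_1\uplus Y_1}$, so the inductive hypothesis yields $\gamma_2,\dots,\gamma_p$ together with matching traces in $N$ and $M$; concatenating the first step gives the required pair. In the backward direction, given $\gamma_1,\dots,\gamma_p$ and matching traces in $N$ and $M$, apply Theorem~\ref{thm:compositionality} in the other direction to the first steps to obtain $\marking{N\comp M}{X\uplus Y}\dtrans{\alpha_1}{\beta_1}\marking{N\comp M}{X_1\uplus Y_1}$, then invoke the inductive hypothesis on the length $p-1$ suffixes.

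The proof is essentially a bookkeeping exercise, so there is no real obstacle; the only thing one needs to observe is the unique factorisation $Z=X_*\uplus Y_*$ of subsets of $P_N+P_M$ so that the inductive hypothesis can be applied at each intermediate marking. No separate argument is needed for traces of length $0$, and the direction chosen (induct on the head or the tail of the trace) is immaterial.
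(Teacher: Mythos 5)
Your proof is correct and follows exactly the route the paper takes: the paper's own proof is a one-line ``simple induction on $p$, using the conclusion of Theorem~\ref{thm:compositionality}'', which your write-up merely spells out (including the unique factorisation of markings of $N\comp M$ over $P_N+P_M$). Nothing further is needed.
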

In particular, to check for reachability in a composed net, it suffices to find computations in the components that agree on their shared boundary.

\smallskip
The other operation on nets with boundaries is $\otimes$, which can be understood as a parallel composition of nets. Given $N:k\to l$ and $M:m\to n$, $M\otimes N: k+m \to l+n$ has:
\begin{itemize}
\item $P_N+P_M$ as its set of places,
\item $T_N+T_M$ as its set of transitions.
$\pre{(t,0)}\Defeq \{\,(p,0) \;|\; p\in \pre{t}\,\}$,
$\pre{(t,1)}\Defeq \{\,(p,1) \;|\; p\in\pre{t}\,\}$, and similarly for
$\post{(t,0)}$ and $\post{(t,1)}$. Instead
$\source{(t,0)}=\source{t}$ while
$\source{(t,1)}=\{\,k+i\;|\;i\in \source{t}\,\}$;
similarly $\target{(t,0)}=\target{t}$ and
$\target{(t,1)}=\{\,l+i\;|\;i\in \target{t}\,\}$.
\end{itemize}
Compositionality also holds w.r.t. $\otimes$: $\marking{M\otimes N}{X+Y}\dtrans{\alpha\gamma}{\beta\delta} \marking{M\otimes N}{X'+Y'}$ iff $\marking{M}{X}\dtrans{\alpha}{\beta}\marking{M}{X'}$ and $\marking{N}{Y}\dtrans{\gamma}{\delta}\marking{N}{Y'}$. Due to space constraints we omit the details here; they are straightforward as there is no interaction between the two nets.

\subsection{From nets with boundaries to NFAs}
\label{sec:netstonfas}
By an NFA with boundaries $A:k\to l$ we mean an NFA $A$ with set of labels  $\{0,1\}^k\times\{0,1\}^l$, written $\alpha/\beta$, where $\alpha\in\{0,1\}^k$ and $\beta\in\{0,1\}^l$. Given NFA with boundaries $A:k\to l$ and $B:l\to m$, the NFA with boundaries $A\comp B:k\to m$ is obtained by a variant of the product construction where $(x,y)\dtrans{\alpha}{\beta}(x',y')$ iff there exists $\gamma\in\{0,1\}^l$ such that $x\dtrans{\alpha}{\gamma} x'$ and $y\dtrans{\gamma}{\beta}y'$. Given NFA with boundaries $A:k\to l$ and $B:m\to n$, the NFA with boundaries $A\otimes B:k+m\to l+n$ is obtained via another variant of the product construction: here $(x,y)\dtrans{\alpha\gamma}{\beta\delta}(x',y')$ iff $x \dtrans{\alpha}{\beta} x'$ and $y\dtrans{\gamma}{\delta} y'$. The algebra of automata with boundaries described above is an instance of Span(Graph)~\cite{Katis1997a}.

Given a net with boundaries $N:k\to l$, and non-empty sets $\mathcal{X},\mathcal{Y}\subseteq 2^{2^{P_N}}$ of, respectively, \emph{initial} and \emph{final} markings, we can consider its labelled transition system as an NFA, written $\NFA{N}{\mathcal{X}}{\mathcal{Y}}$, that has initial states $\mathcal{X}$ and final states $\mathcal{Y}$. If $N:k\to l$ does not have any places then $\NFA{N}{\{\varnothing\}}{\{\varnothing\}}$ has exactly one state, which is an accept state (see NFA for $\top$, $\bot$ in \figref{fig:piecestonfas}). The following is immediate.
\begin{proposition}
Given $N:k\to l$, initial and final markings $\mathcal{X},\,\mathcal{Y}$, a marking in $\mathcal{Y}$ is reachable from a marking in $\mathcal{X}$ iff $L(\NFA{N}{\mathcal{X}}{\mathcal{Y}})\neq\varnothing$. \qed
\end{proposition}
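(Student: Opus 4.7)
The plan is essentially definitional unfolding. By construction in Section~3.2, $\NFA{N}{\mathcal{X}}{\mathcal{Y}}$ has as its state space the markings of $N$, as its initial and accepting sets the given $\mathcal{X}$ and $\mathcal{Y}$, and as its transition relation precisely the step-semantics LTS of $N$ defined in Section~2. Non-emptiness of the language of an NFA depends only on the existence of an accepting run, i.e.\ a finite path in the transition graph from some initial state to some final state; the particular labels along the run do not matter.

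First I would spell out that $L(\NFA{N}{\mathcal{X}}{\mathcal{Y}})\neq\varnothing$ iff there exists a finite sequence
\[
\marking{N}{X_0}\dtrans{\alpha_1}{\beta_1}\marking{N}{X_1}\dtrans{\alpha_2}{\beta_2}\cdots\dtrans{\alpha_n}{\beta_n}\marking{N}{X_n}
\]
(with $n\geq 0$, the case $n=0$ corresponding to $\mathcal{X}\cap\mathcal{Y}\neq\varnothing$) such that $X_0\in\mathcal{X}$ and $X_n\in\mathcal{Y}$. Second, I would observe that by the definition of the step-firing transition relation given in Section~2, the existence of such a sequence is exactly a witness that some marking in $\mathcal{Y}$ is reachable from some marking in $\mathcal{X}$. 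Both implications then drop out immediately by induction on $n$, since the two characterisations unfold to the same statement.

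There is no real obstacle, as the proposition is a tautology relative to the construction of $\NFA{N}{\mathcal{X}}{\mathcal{Y}}$. The one point one might worry about is the mismatch between step semantics (used to define the LTS, for the sake of compositionality as in Theorem~\ref{thm:compositionality} and Remark~\ref{rmk:step}) and the ordinary interleaving notion of reachability: this is harmless because every MI step decomposes into a sequence of single-transition firings (the transitions being pairwise independent) and, conversely, every single-transition firing is an MI step with a singleton set, so the two reachability relations on markings coincide.
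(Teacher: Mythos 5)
Your proof is correct and matches the paper, which simply declares the proposition ``immediate'': the NFA is by construction the step-semantics LTS with initial states $\mathcal{X}$ and accepting states $\mathcal{Y}$, so non-emptiness of its language is literally the existence of a path from $\mathcal{X}$ to $\mathcal{Y}$. Your closing remark that step-reachability and interleaving-reachability coincide (since MI steps serialise and singletons are MI) is a sensible extra check but not something the paper spells out.
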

We also have the following as an immediate consequence of Theorem~\ref{thm:compositionality}:
\[
\NFA{N\comp M:k\to m}{\mathcal{X}\uplus \mathcal{X'}}{\mathcal{Y}\uplus\mathcal{Y'}} \cong (\NFA{N:k\to l}{\mathcal{X}}{\mathcal{Y}}) \comp (\NFA{M:l\to m}{\mathcal{X'}}{\mathcal{Y'}})
\]
and in particular the two automata accept the same language.

\subsection{Weak closure and minimisation}
\label{sec:weakclosure}

Hiding internal computations in individual component nets is crucial for the performance of our technique. The procedure is akin to the  $\tau$-reflexive-transitive closure of an LTS $L$, which yields an LTS $L'$ on which bisimilarity agrees with weak-bisimilarity on $L$, in the sense of Milner~\cite{Milner1989}.

Let $\epsilon_{k,l}=0^k/0^l$. Sometimes we will write simply $\epsilon$ when $k$ and $l$ are clear from the context. Notice that given any net $N:k\to l$, for each marking $X$ there is a transition $\marking{N}{X}\epstrans{k}{l}\marking{N}{X}$ that arises from firing the empty set of net-transitions. In general, transitions $\marking{N}{X}\epstrans{k}{l}\marking{N}{X'}$ witness the firing of ``internal'' net-transitions in $N$, ie those that are not connected to any boundary port.

The \emph{weak} transition system induced by $N:k\to l$
has transitions:
\begin{equation}\label{eq:weakSemantics}
\marking{N}{X} \;\dTrans{\alpha}{\beta}\; \marking{N}{X'} \ \Leftrightarrow\
\exists X'', X'''.\; \marking{N}{X} (\epstrans{k}{l})^* \marking{N}{X''},\,
                          \marking{N}{X''} \dtrans{\alpha}{\beta} \marking{N}{X'''},\,
                          \marking{N}{X'''} (\epstrans{k}{l})^* \marking{N}{X'}
\end{equation}
Note that the above notion of weak transition differs from that considered in~\cite{Bruni2012} but is close to the weak transitions of~\cite{Soboci'nski2009a}.

\begin{theorem}[Compositionality w.r.t. weak semantics\label{thm:weakCompositionality}]
Suppose that $N:k\to l$ and $M:l\to m$ are nets with boundaries.  Then for all $X,X'\subseteq P_N$, $Y,Y'\subseteq P_M$, $\alpha\in\{0,1\}^k$, $\beta\in\{0,1\}^m$:
\begin{enumerate}[(i)]
\item if $\marking{N\comp M}{X + Y} \dTrans{\alpha}{\beta} \marking{N\comp M}{X' + Y'}$
then $\exists\,p,q\in\N$,
$\gamma,\gamma_i,\gamma_j'\in \{0,1\}^l$
for $1\leq i \leq p$ and $1\leq j\leq q$
\[
\!\!\!\!\!\!\!\!\!\!\!\!\!
\marking{N}{X}\dtrans{\!0^k\!}{\!\gamma_1\!}\dots\dtrans{\!0^k\!}{\!\gamma_p\!}
 \dtrans{\!\alpha\!}{\!\gamma\!}
\dtrans{\!0^k\!}{\!\gamma'_1\!}\dots \dtrans{\!0^k\!}{\!\gamma'_q\!}
\marking{N}{X'}
 \text{ and }
\marking{M}{Y}\dtrans{\!\gamma_1\!}{\!0^m\!}\dots\dtrans{\!\gamma_p\!}{\!0^m\!}
 \dtrans{\!\gamma\!}{\!\beta\!}
\dtrans{\!\gamma'_1\!}{\!0^m\!} \dots \dtrans{\!\gamma'_q\!}{\!0^m\!}
\marking{M}{Y'}.
\]
\item if $\marking{N}{X}\dTrans{\alpha}{\gamma} \marking{N}{X'}$ and $\marking{M}{Y}\dTrans{\gamma}{\beta}\marking{M}{Y'}$ for some
$\gamma\in\{0,1\}^l$
then $\marking{N\comp M}{X + Y}\dTrans{\alpha}{\beta} \marking{N\comp M}{X'+ Y'}$. \qed
\end{enumerate}
\end{theorem}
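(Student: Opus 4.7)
The plan is to derive both directions from Theorem~\ref{thm:compositionality}, applied transition-by-transition, exploiting the fact (noted above) that idle transitions $\marking{N}{X} \epstrans{k}{l} \marking{N}{X}$ and $\marking{M}{Y} \epstrans{l}{m} \marking{M}{Y}$ always exist, as witnesses of firing the empty set of net-transitions; these idles are what will let us bridge the otherwise independent $\epsilon$-sequences in the two components.

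For part (i), I would first unfold the weak transition on the left according to~\eqref{eq:weakSemantics} into an explicit sequence of strong transitions
\[
\marking{N\comp M}{X+Y} \dtrans{0^k}{0^m} \cdots \dtrans{0^k}{0^m} \dtrans{\alpha}{\beta} \dtrans{0^k}{0^m} \cdots \dtrans{0^k}{0^m} \marking{N\comp M}{X'+Y'}
\]
and then apply Theorem~\ref{thm:compositionality} pointwise. The $i$-th $0^k/0^m$-step in the $\epsilon$-prefix factors through some middle label $\gamma_i \in \{0,1\}^l$, yielding a transition of shape $0^k/\gamma_i$ in $N$ and $\gamma_i/0^m$ in $M$; the suffix similarly produces labels $\gamma'_j$; and the single $\alpha/\beta$-step factors through some $\gamma$. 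Concatenating the resulting $N$- and $M$-transitions in the original order gives exactly the two traces in the conclusion. The substantive observation driving the form of (i) is that ``silent in the composite'' does not entail ``silent in each component'': the shared-boundary label $\gamma_i$ is free to be nonzero.

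For part (ii), I would unfold each of the two given weak transitions into an $\epsilon$-prefix, a main $\alpha/\gamma$ (resp.\ $\gamma/\beta$) step, and an $\epsilon$-suffix, of some lengths $p, q$ in $N$ and $r, s$ in $M$. The plan is then to assemble a matching sequence of strong transitions in $N \comp M$ by interleaving and padding: pair each of the $p$ $\epsilon$-steps of $N$ with the $M$-idle at the current $M$-state (both middle labels equal to $0^l$, so Theorem~\ref{thm:compositionality} delivers a $0^k/0^m$-step of the composite); then pair each of the $r$ $\epsilon$-steps of $M$ with the $N$-idle at the post-prefix state of $N$; then pair the two main transitions, whose middle labels agree on $\gamma$ by hypothesis, to obtain a single $\alpha/\beta$-step; and finally dispose of the two $\epsilon$-suffixes symmetrically. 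Wrapping the concatenated sequence according to~\eqref{eq:weakSemantics} produces the required weak transition.

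The arithmetic is routine; the only point requiring care is that the idle transitions carry precisely $0^k/0^l$ in $N$ and $0^l/0^m$ in $M$, which is exactly what makes the middle labels in the Theorem~\ref{thm:compositionality} pairings of part (ii) align. The main conceptual obstacle---already visible in the statement of (i)---is realising that the $\epsilon$-closure of $N \comp M$ is strictly stronger than what can be built from the $\epsilon$-closures of $N$ and $M$ alone, which is why arbitrary $\gamma_i, \gamma'_j \in \{0,1\}^l$ (not just $0^l$) must appear in the conclusion of (i).
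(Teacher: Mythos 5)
Your proposal is correct and follows essentially the same route as the paper: part (i) is obtained by unfolding the weak transition and applying Theorem~\ref{thm:compositionality} step-by-step (the paper packages the treatment of the $\epsilon$-segments into Lemma~\ref{lem:decompWeakTrace}, proved by exactly the induction you describe), and part (ii) uses the same interleaving-with-idle-transitions argument, pairing each component's silent steps with the other component's empty-step transition before and after the synchronised main step. Your observation that the middle labels $\gamma_i,\gamma'_j$ may be nonzero, which forces the asymmetric form of (i), matches the paper's reason for stating (i) as it does.
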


Given an NFA with boundaries $A:k\to l$, let $\epsmin{A}:k\to l$ denote the DFA obtained by $\epsilon_{k,l}$-closure and minimisation.
\begin{remark}
Recall that any ordinary net $N$ can be considered as a net with boundaries $N:0\to 0$. Now $\epsmin{\NFA{N}{\mathcal{X}}{\mathcal{Y}}}:0\to 0$ is one of two DFAs: the DFA with one accept state (if a marking in $\mathcal{Y}$ is reachable from some marking in $\mathcal{X}$) and the DFA with one non-accept state (if no markings in $\mathcal{Y}$ are reachable from any marking in $\mathcal{X}$).
\end{remark}
Given an ordinary Petri net $N$, initial markings $\mathcal{X}$ and final markings $\mathcal{Y}$, a simple but extremely inefficient way of checking the reachability of a marking is thus to directly compute $\epsmin{\NFA{N}{\mathcal{X}}{\mathcal{Y}}}$ and check whether the single state in the resulting DFA is an accept state. Our technique for checking reachability is based on computing this DFA using a structural decomposition of $N$, which, when combined with memoisation, can result in fast execution times.

\subsection{Correctness}
\label{sec:correctness}

Here we give a formal account of our technique and prove it correct, using the previous results in this section. A  \emph{wiring expression} is a syntactic term formed from the following grammar
\[
T \bnfEq x \bnfSep T \comp T \bnfSep T \otimes T
\]
where the leaves $x$ are variables. A \emph{variable assignment} $\mathcal{V}$ is a map that takes variables to nets with boundaries. Given a pair $(t,\,\mathcal{V})$ of a wiring expression $t$ and variable assignment $\mathcal{V}$, its semantics $\semanticsOf{t}_{\mathcal{V}}$ is a net with boundaries, defined recursively in the obvious way: $\semanticsOf{x}_\mathcal{V} \Defeq \mathcal{V}(x)$, $\semanticsOf{t_1 \comp t_2}_\mathcal{V} \Defeq \semanticsOf{t_1}_\mathcal{V} \comp \semanticsOf{t_2}_\mathcal{V}$ and $\semanticsOf{t_1\otimes t_2}_\mathcal{V} \Defeq \semanticsOf{t_1}_\mathcal{V}\otimes\semanticsOf{t_2}_\mathcal{V}$. We implicitly assume that variable assignments are compatible with $t$, in the sense that only nets with a common boundary are composed; we leave out the formal details, which are straightforward.
Given a net $N:k\to l$, we say that $(t,\,\mathcal{V})$ is a \emph{wiring decomposition} of $N$ if $\semanticsOf{t}_\mathcal{V}\cong N$.

Given a wiring decomposition $(t,\,\mathcal{V})$ of $N:k\to l$, together with maps $\mathcal{I}$, $\mathcal{F}$ called, respectively, \emph{initial markings} and \emph{final markings}, that take each variable $x$ to a set of markings of the net $\mathcal{V}(x)$, define $\epsminpr{t}_{(\mathcal{V},\mathcal{I},\mathcal{F})}$ recursively:
\begin{align*}
\epsminpr{x}_{(\mathcal{V},\mathcal{I},\mathcal{F})}
    &\Defeq \epsmin{\NFA{\mathcal{V}(x)}{\mathcal{I}(x)}{\mathcal{F}(x)}},\,\\
 \epsminpr{t\comp t'}_{(\mathcal{V},\mathcal{I},\mathcal{F})} &\Defeq \epsmin{ \epsminpr{t}_{(\mathcal{V},\mathcal{I},\mathcal{F})} \comp \epsminpr{t'}_{(\mathcal{V},\mathcal{I},\mathcal{F})}},\,\\
\epsminpr{t \otimes t'}_{(\mathcal{V},\mathcal{I},\mathcal{F})} &\Defeq \epsmin{ \epsminpr{t}_{(\mathcal{V},\mathcal{I},\mathcal{F})} \otimes \epsminpr{t'}_{(\mathcal{V},\mathcal{I},\mathcal{F})} }.
\end{align*}
 The function $\epsminpr{-}$ is the formalisation of our approach, taking a wiring decomposition, together with initial and final markings to a minimal DFA. 
 Sets of markings of the leaf nets given by $\mathcal{I}$ and $\mathcal{F}$ can be combined to form a set of markings $\mrk{t}{\mathcal{I}}$ of $\semanticsOf{t}_{\mathcal{V}}$ in an obvious way: $\mrk{x}{\mathcal{I}} \Defeq \mathcal{I}(x)$, $\mrk{t\comp t'}{\mathcal{I}} \Defeq \mrk{t}{\mathcal{I}} \uplus \mrk{t'}{\mathcal{I}}$,  $\mrk{t\otimes t}{\mathcal{I}} \Defeq \mrk{t}{\mathcal{I}} \uplus \mrk{t'}{\mathcal{I}}$ (and similarly for $\mathcal{F}$.)


\begin{theorem}[Correctness]\label{thm:correctness}
Suppose $(t,\mathcal{V})$ is a wiring decomposition of $N:k\to l$, $\mathcal{I}$  initial markings and $\mathcal{F}$ final markings. Then
$\epsminpr{t}_{(\mathcal{V},\mathcal{I},\mathcal{F})}
    \cong
        \epsmin{ \NFA{\semanticsOf{t}_\mathcal{V}}{\mrk{t}{\mathcal{I}}}{\mrk{t}{\mathcal{F}} } }$.\qed
\end{theorem}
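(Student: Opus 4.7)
The plan is to proceed by structural induction on the wiring expression $t$. The base case $t = x$ is immediate: unfolding the definitions of $\semanticsOf{-}$, $\mrk{-}{-}$ and $\epsminpr{-}$, both sides equal $\epsmin{\NFA{\mathcal{V}(x)}{\mathcal{I}(x)}{\mathcal{F}(x)}}$.

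For the inductive step $t = t_1 \comp t_2$, write $A_i \Defeq \NFA{\semanticsOf{t_i}_\mathcal{V}}{\mrk{t_i}{\mathcal{I}}}{\mrk{t_i}{\mathcal{F}}}$ so that, by the inductive hypothesis, $\epsminpr{t_i}_{(\mathcal{V},\mathcal{I},\mathcal{F})} \cong \epsmin{A_i}$. The NFA-level isomorphism recorded just after Theorem~\ref{thm:compositionality} in~\S\ref{sec:netstonfas} gives $\NFA{\semanticsOf{t}_\mathcal{V}}{\mrk{t}{\mathcal{I}}}{\mrk{t}{\mathcal{F}}} \cong A_1 \comp A_2$, reducing the goal to $\epsmin{\epsmin{A_1} \comp \epsmin{A_2}} \cong \epsmin{A_1 \comp A_2}$. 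Since both sides are minimal DFAs---determined up to isomorphism by the language they accept---it suffices to show that $\epsmin{A_1} \comp \epsmin{A_2}$ and $A_1 \comp A_2$ accept the same weak language.

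The crucial ingredient is a trace-level extension of Theorem~\ref{thm:weakCompositionality}: a weak run $\dTrans{\alpha_1}{\beta_1}\dots\dTrans{\alpha_n}{\beta_n}$ exists in $A_1 \comp A_2$ iff there are intermediate labels $\gamma_1,\dots,\gamma_n$ with matching weak runs $\dTrans{\alpha_i}{\gamma_i}$ in $A_1$ and $\dTrans{\gamma_i}{\beta_i}$ in $A_2$. This follows by induction on $n$ from the single-step version. Because $\epsmin{A_i}$ has the same weak language as $A_i$ but no residual silent transitions, weak runs in $A_i$ are exactly strong runs in $\epsmin{A_i}$, which is precisely what the product construction defining $\epsmin{A_1} \comp \epsmin{A_2}$ enumerates---modulo the new silent transitions introduced in the product by non-zero synchronising labels $\gamma$, which are absorbed by the outer $\epsmin$. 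Hence the two weak languages agree.

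The step $t = t_1 \otimes t_2$ is analogous and strictly easier, since the compositionality for $\otimes$ recorded at the end of~\S\ref{sec:composition} involves no synchronising middle labels: runs in $A_1 \otimes A_2$ are simply pairs of independent runs in $A_1$ and $A_2$, so the induction step collapses. The main obstacle is the trace-level lift of Theorem~\ref{thm:weakCompositionality}: the single-step version already permits each component a sequence of leading, intermediate and trailing silent moves, and when chained across $n$ weak steps these paddings must be threaded consistently so that the $\epsilon$-closure of the product $\epsmin{A_1} \comp \epsmin{A_2}$ is shown to enumerate exactly the weak runs of $A_1 \comp A_2$. Once that bookkeeping is settled the induction closes without further surprises.
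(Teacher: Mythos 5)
Your proposal follows essentially the same route as the paper's proof: structural induction on $t$, reduction via the inductive hypothesis and the NFA-level compositionality isomorphism to the claim $\epsmin{\epsmin{A_1}\comp\epsmin{A_2}}\cong\epsmin{A_1\comp A_2}$, and then a language-equality argument for the $\epsilon$-closures resting on Theorem~\ref{thm:weakCompositionality} (lifted to traces), with the $\otimes$ case dismissed as analogous and easier. Your explicit remark about the fresh silent transitions created in the product by non-zero synchronising labels $\gamma$, absorbed by the outer $\epsmin$, is a correct observation the paper leaves implicit, but it does not change the structure of the argument.
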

An example application of Theorem~\ref{thm:correctness} is the commutativity of the diagram in \figref{fig:minimisation}.

Note that we have not discussed how to obtain a wiring decomposition, starting from a net $N:k\to l$. As demonstrated in \figref{fig:decompositionTimes}, different decompositions result in markedly different performance. Our automated procedure for obtaining a decomposition is described in \S\ref{sec:decomposition}.

%

\section{Implementation and experimental results}
\label{sec:implementation}

Our implementation has been written in Haskell, and is available for
download\footnote{\url{http://users.ecs.soton.ac.uk/os1v07/ICALP13}}.
The high level view of our algorithm is:
\begin{enumerate}
    \item As input, take an ordinary marked net $N$ (considered as a net with
          boundaries $N:0\to 0$) and a target marking, given place-wise, to be
          checked for reachability. Concretely, each place is labelled with
          `Yes,' (token must be present) `No' (token must be absent) or `Don't
          care.'
    \item Using an automatic decomposition procedure (described in
          \S\ref{sec:decomposition}), we decompose the net, obtaining a
          wiring decomposition (as introduced in
          \S\ref{sec:correctness}) enhanced with additional information to enable memoisation.
    \item Taking advantage of memoisation---to eliminate duplicate
          computations---traverse the wiring decomposition tree to compute $\epsminpr{-}$:
    \begin{enumerate}
        \item At leaves, we have (typically, small) nets with boundaries, and
              the local desired marking. We use the procedure described in
              \S\ref{sec:netstonfas}
              to generate the NFA that corresponds to the
              net and apply $\epsilon-$closure and minimisation, described in
              \S\ref{sec:minimisation}.
        \item At a composition node, we generate the NFAs corresponding to each
              sub-tree, and compose them using the variant of
              product-construction discussed in \S\ref{sec:intro}, finally
              $\epsilon-$closing and minimising the resulting NFA.
        \item At a tensor node, we generate the NFAs corresponding to each
              sub-tree, combine them using the standard product
              construction on NFAs, and perform minimisation.
    \end{enumerate}
\end{enumerate}
The experimental results given in Figs.\ \ref{fig:bnandtnexperiments} and
\ref{fig:philotimes} are given for pre-constructed decompositions, that is,
only step 3 of the algorithm is performed. The results in
\figref{fig:tndecomposetimes} were obtained using the implementation of the full algorithm.
\subsection{Decomposer}
\label{sec:decomposition}

Our net decomposition algorithm attempts to find decompositions via two
simple approaches: first we look for a net-transition that, when removed,
results in two disconnected nets. If many such transitions exist then we take
the one that results in the most balanced (in number of places) decomposition.
An example is the balanced decomposition in \figref{fig:decompositions}. If
such a transition cannot be found, we look for a place that, once removed,
results in two disconnected nets. This results in a `$\comp$' node (that
results from removing the place) followed by a `$\otimes$' node (that composes
the two disconnected nets). Again, if many such places exist, we look for one
that results in the most balanced decomposition. An example of this
decomposition strategy is the decomposition in \figref{fig:treenet}. Both
searches are quadratic in the size of the net. If neither a suitable transition
nor place is found, we remove a place that results in the smallest boundary,
after decomposition. The time taken to decompose the net $T_n$ is given in
\figref{fig:tndecomposetimes}; in this example
the time to decompose the net dominates. Note that, given a net, a
decomposition must be computed (or given as input) only once, whence 
different various initial markings and desired markings can be
considered.



\subsection{NFA $\epsilon-$closure and minimisation}
\label{sec:minimisation}

Our approach relies on ignoring internal computations to reduce the state space
to be explored. To produce minimal DFAs for an input NFA, we apply epsilon
closure, and minimisation, as detailed in \S\ref{sec:weakclosure}. We perform
epsilon closure through a variant of the subset-construction on NFAs, which
constructs the NFA of sets of states reachable through $\epsilon-$ or standard
transitions, starting from the $\epsilon-$closure of the initial states of the
input NFA. To perform minimisation we employ the well-known algorithm of
Brzozowski~\cite{Brzozowski1962}.

A notable implementation detail is that we use a variant of Reduced Ordered
Binary Decision Diagrams (ROBDD, commonly written as BDD) to encode the
transition relation of the NFA---the labels of our transitions are binary
strings and thus any state $x\in X$ gives rise to a function $\{0,1\}^{k+l}\to \mathcal{P}(X)$. Traditionally, BDDs are used to provide compact representations for functions $\{0,1\}^n\to \{0,1\}$, but we found it a straightforward exercise to generalise from the boolean algebra of the booleans to the boolean algebra of subsets.


\subsection{Experimental results and discussion}

In addition to the results in \figref{fig:bnandtnexperiments} we considered a standard net encoding of the dining philosopher problem. Given the nets in \figref{fig:philocomponents}, let $PhRow_1 \Defeq (ph \comp fk)$, $PhRow_{k+1} \Defeq (ph \comp (fk \comp  PhRow_k))$. Then a table of $n$ dining philosophers can be obtained as:
\begin{equation}\label{eq:philoexpression}
Ph_n\Defeq d_3\comp ((i_3 \otimes PhRow_n) \comp e_3) \qquad (\text{see  \figref{fig:philodecomp}}).
\end{equation}
Running times, when checking for deadlock in $Ph_n$, are given in
\figref{fig:philotimes}. The slow growth w.r.t. $n$ illustrates the fact that
our technique works well when a fixed point is quickly reached when traversing
a wiring decomposition, for example, the right decomposition of $B_n$ in
\figref{fig:decompositions} reaches a fixed point after one `$\comp$' node in
the wiring decomposition.
The fixed point for \eqref{eq:philoexpression} is reached when calculating $PhRow_3$: the resulting minimal DFA has 10 states, as shown in \figref{fig:philofixedpoint}. Intuitively, this means that while one can distinguish between 1, 2 and $\geq$3 philosophers via interaction on the boundary,
 all $PhRow_k$ reduce to the same minimal DFA for $k\geq 3$. Our
procedure takes advantage of this: memoisation of compositions means that we
minimise only once.
%
\begin{figure}
\[
\includegraphics[height=2.5cm]{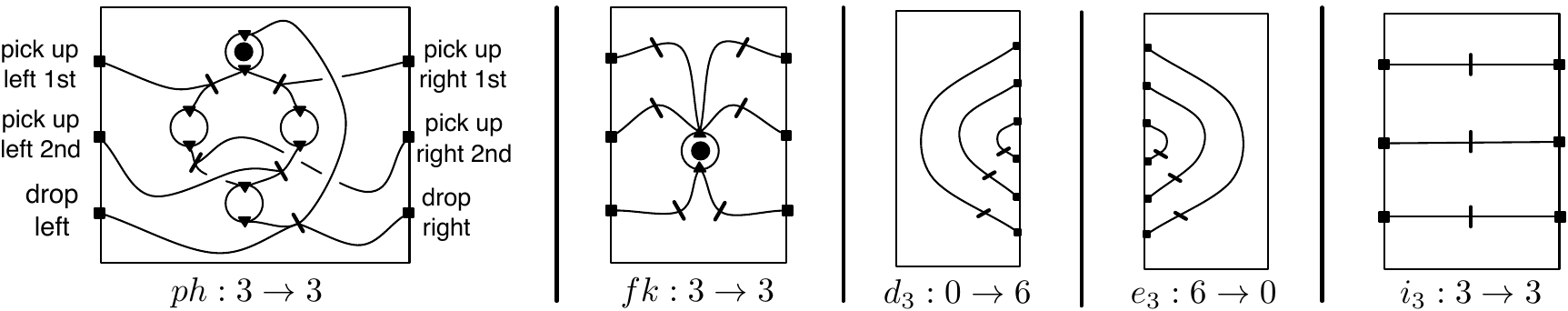}
\]
\caption{Component nets of philosopher decomposition.}
\label{fig:philocomponents}
\end{figure}
\begin{figure}
\centering
\begin{subfigure}[t]{0.45\linewidth}
    \begin{tabular}{| c || r | r |}
\hline
n & deconstruction [s] & NFA generation [s] \\ \hline
4  & 0.052   & 0.008 \\
5  & 0.240   & 0.008 \\
6  & 1.108   & 0.004 \\
7  & 5.104   & 0.008 \\
8  & 23.261  & 0.008 \\
9  & 103.106 & 0.012 \\
10 & 451.628 & 0.012 \\
\hline
\end{tabular}

    \caption{Time to deconstruct $T_n$ (as per
    \figref{fig:treenet}) and generate the minimal DFA.}
\label{fig:tndecomposetimes}
\end{subfigure}%
\hspace{0.5cm}
\begin{subfigure}[t]{0.45\linewidth}
\centering
    \begin{tabular}{| c || r | r | r | }
\hline
n    & time [s] \\ \hline
1    & 2.072 \\
4    & 3.844 \\
16   & 3.924 \\
64   & 3.920 \\
128  & 3.908 \\
256  & 3.908 \\
1024 & 3.684 \\
\hline
\end{tabular}

\caption{Time to generate minimal DFA for $Ph_n$, defined in~\eqref{eq:philoexpression}.}
\label{fig:philotimes}
\end{subfigure}
\caption{Example NFA construction times for $B_n$ and $T_n$.}
\end{figure}

Many nets are not amenable to efficient decomposition and are unsuitable for
our technique. For instance, our implementation performs poorly when input nets are cliques, nets where every place is connected to every other by a transition, or in general, on ``densely connected'' nets.
One reason why our technique is infeasible for such nets is because
two factors influence the size of the generated NFA from a net $N:k\to l$: \textit{(i)} the number of places---if $N$ has $n$ places, this can translate to potentially $2^n$ NFA-states, and \textit{(ii)} the size of the net boundaries, since it implies an alphabet of size up to $2^{(k + l)}$. In fact, even with hand constructed decompositions, our implementation fails to terminate even for very small cliques due to large boundaries in any decomposition.


\section{Related work}
\label{sec:related}

\paragraph{Algebras of nets and automata.}
The algebra of automata with boundaries used in this paper is an instance of
the algebra of Span(Graph)~\cite{Katis1997a}, developed by {R.F.C.} Walters and
collaborators: in fact, a translation from nets to this algebra was already
present in~\cite{Katis1997b}. The goal of the more recent
work~\cite{Soboci'nski2010, Bruni2011, Bruni2012} was to lift this algebra
to the level of nets in a compositional way, study the resulting behavioural equivalences and explore connections with process algebra. A theme of our work is to ignore state and focus on external interactions: here we were inspired by the ideas of Milner~\cite{Milner1989}. Conceptually related
approaches in semantics of programming languages
include~\cite{Reddy1996,Ghica2003}.

\paragraph{Reachability} in bounded, finite state Petri nets is a
widely-studied problem and there are several well-known approaches to
mitigating the impact of state-explosion (it follows from~\cite{Cheng1993} that
the problem is PSPACE-complete.) Due to space constraints we are able to offer
only cursory overviews and comparisons of techniques that are most related to
our approach. A well-known technique is partial order reduction: in a seminal
paper, McMillan~\cite{McMillan1995} used the unfolding
construction~\cite{Nielsen1980} in order to analyse reachability in Petri nets
by generating finite complete prefixes, that is, initial parts of unfoldings
that suffice for reachability. The algorithm to compute the finite complete
prefix was later improved~\cite{Esparza2002,Khomenko2003}. Unfoldings (and
finite complete prefixes) carry more information about the computations of nets
than merely reachability, for instance, allowing LTL model
checking~\cite{Esparza2001}. For an overview of the extensive field
see~\cite{Esparza2008}. A finite complete prefix must be constructed prior to a
reachability analysis, analogously to our construction of a wiring
decomposition prior to translation. Because of the different nature of the two
approaches, it is difficult to offer a thorough analysis of the relative
performance of the two approaches: on some of the examples we have considered
the performance of our implementation is competitive (compare
\figref{fig:decompositionTimes} with~\cite[Table 1]{Esparza2002}.) Another
technique, known as symmetry reduction~\cite{Starke1991,Schmidt2000}, exploits
symmetries in the state space: the goal is, roughly, to build a reduced
reachability graph in order to visit only one representative from each orbit.
Our use of memoisation is similar in spirit to symmetry reduction, since we
only need to translate any particular wiring decomposition once.

In experiments ($B_n$, $T_n$, $Ph_n$ and others) our implementation often performs well in identifying unreachable configurations;
this is because in many systems the reasons for a configuration being
unreachable are ``local''. Here our approach contrasts with techniques such as
unfolding or symmetry reduction where (efficient representations of) explicit
reachability graphs are constructed.

\section{Conclusions and future work}

\label{sec:discussion}
We have introduced a new technique for reachability in bounded Petri nets, based on \textit{(i)} structural decomposition using a recently developed compositional algebra and \textit{(ii)} avoiding state explosion by focusing only on interactions between component nets, forgetting internal state.
Our technique depends on finding efficient decompositions and works best when
the computation reaches fixpoints w.r.t. interactions on boundaries in composed systems, as illustrated in the examples that we have highlighted. We have proved that the technique is correct, implemented it and performed a number of experiments. Finally, we have developed and implemented an algorithm for automatic decomposition of nets that performs adequately on a number of examples.

In future work we plan to improve our decomposition algorithm and characterise the class of nets to which our approach is suited. Additionally, by using the full algebra~\cite{Soboci'nski2010,Bruni2012} of nets, in particular, the possibility of connecting several transitions to the same boundary port, we hope to alleviate some of the problems identified in \S\ref{sec:implementation}.
We also plan to generalise our approach to other models: for example by
examining symbolic representations of the algebras of P/T nets in~\cite{Bruni2011,Bruni2012} we hope to extend our technique to coverability.






{
\bibliography{../jab}
\bibliographystyle{abbrv}
}
\newpage
\section*{Appendix}

\begin{figure}
\[
\!\!\!\!\!\!\!\!\!\includegraphics[height=17cm]{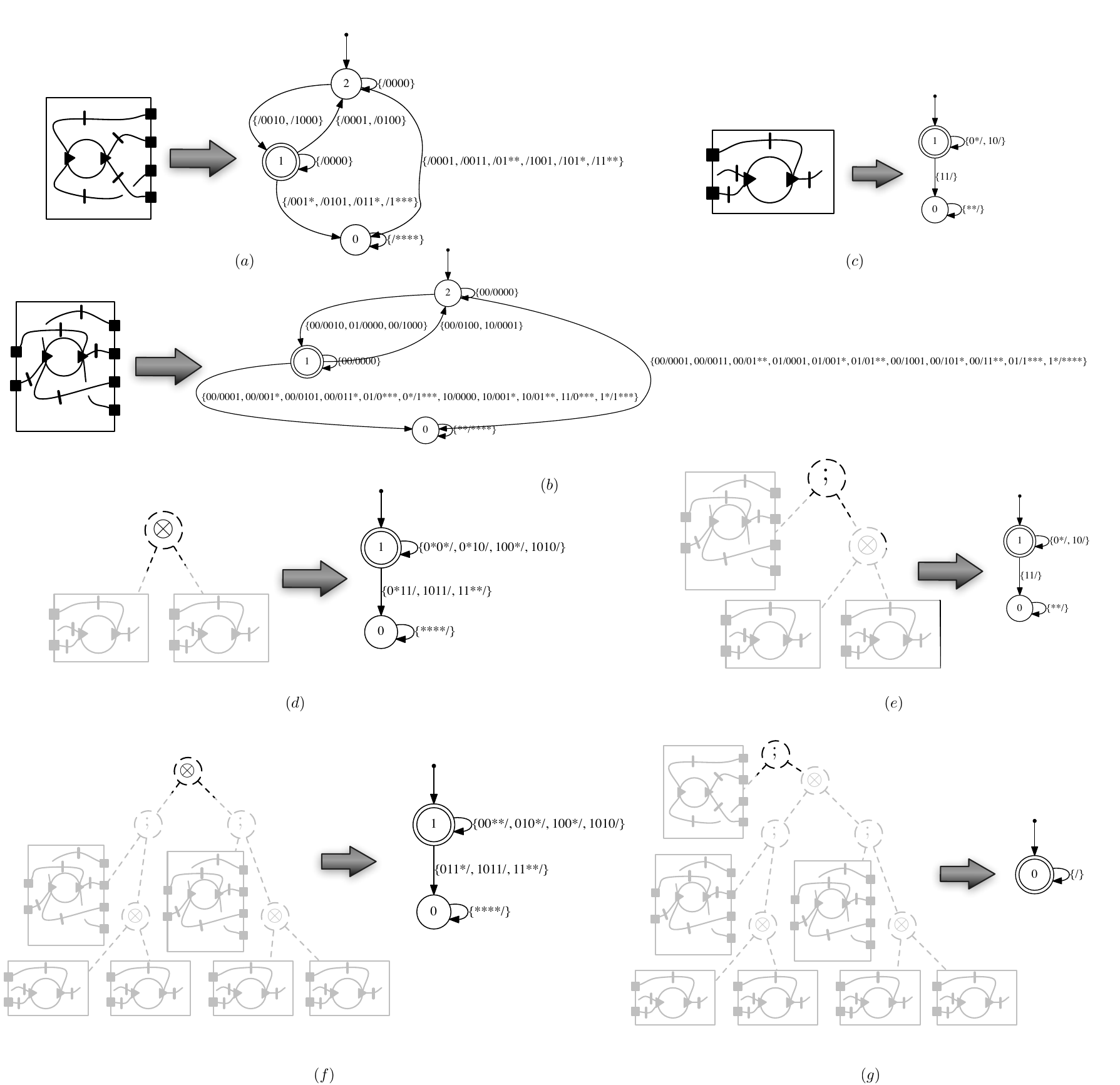}
\]
\caption{Steps involved in translating $T_3$ to an NFA.\label{fig:treeTranslation}}
\end{figure}

\begin{figure}
\label{fig:diningphilos}
\[
\includegraphics[height=8cm]{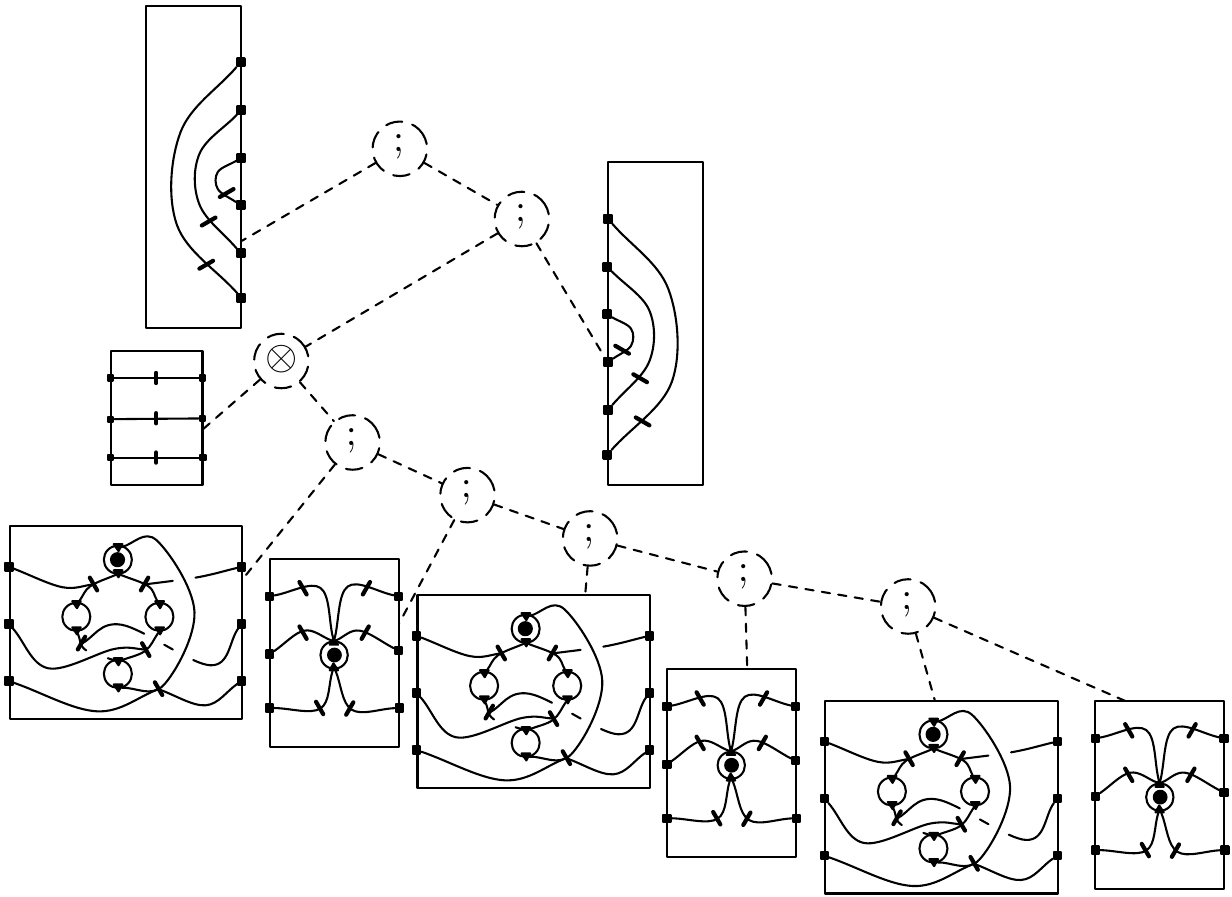}
\]
\caption{Decomposition of three dining philosophers.\label{fig:philodecomp}}
\end{figure}

In order to prove compositionality we first need to prove a small, technical lemma.
\begin{lemma}\label{lem:minimalDecomposition}
Suppose that $N:k\to l$ and $M:l\to m$ are nets with boundaries and $(U,V)$ is a non-trivial synchronisation. Then there exists a mutually independent family $\{(U_i,V_i)\}_{i\in I}$ of minimal synchronisations with $U=\bigcup_{i\in I}U_i$ and $V=\bigcup_{i\in I} V_i$.
\end{lemma}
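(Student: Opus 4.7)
The plan is to decompose $(U,V)$ using the bipartite structure induced on $U \cup V$ by the shared boundary. The key observation is that the standing hypothesis on nets with boundaries---distinct transitions in $T_N$ have disjoint targets, and distinct transitions in $T_M$ have disjoint sources---forces each port $p \in \target{U} = \source{V}$ to be matched to a unique $u \in U$ with $p \in \target{u}$ and a unique $v \in V$ with $p \in \source{v}$.

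I would first build the bipartite graph $G$ on vertex set $U \sqcup V$ whose edges connect $u \in U$ and $v \in V$ exactly when $\target{u} \cap \source{v} \neq \varnothing$. Let $\{C_i\}_{i \in I}$ be the connected components of $G$ and set $U_i := C_i \cap U$ and $V_i := C_i \cap V$. These form partitions of $U$ and of $V$, so $U = \bigcup_{i} U_i$ and $V = \bigcup_{i} V_i$ automatically.

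Three properties must then be checked for each pair $(U_i, V_i)$. \emph{Synchronisation:} mutual independence is inherited from $U$ and $V$; to see $\target{U_i} = \source{V_i}$, any $p \in \target{u}$ with $u \in U_i$ matches a unique $v \in V$ via $p \in \source{v}$, and adjacency of $u$ and $v$ places $v$ in the same component, so $v \in V_i$ and $p \in \source{V_i}$ (the reverse inclusion is symmetric). \emph{Non-triviality:} each component contains at least one vertex; isolated vertices can only arise from $u \in U$ with $\target{u} = \varnothing$ (or symmetrically $v \in V$ with $\source{v} = \varnothing$), and yield the already minimal singleton synchronisation $(\{u\}, \varnothing)$ (respectively $(\varnothing, \{v\})$). \emph{Minimality:} if $(U', V') \subsetneq (U_i, V_i)$ were non-trivial, connectedness of $C_i$ supplies an edge of $G$ crossing the frontier of $U' \cup V'$ within $C_i$; in the typical case $u \in U'$ is adjacent to $v \in V_i \setminus V'$ with $p \in \target{u} \cap \source{v}$, whence $p \in \target{U'}$ but, by source-disjointness on $T_M$, $p \notin \source{V'}$, contradicting $\target{U'} = \source{V'}$ (the other case is symmetric).

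The mutual independence of the whole family is immediate because all $U_i$'s sit inside the MI set $U$ (and similarly all $V_i$'s inside $V$). The one step that requires care is the minimality argument, which relies essentially on the boundary-disjointness hypothesis built into the definition of a net with boundaries: without it, a single boundary port could be shared among several transitions on one side, and the connected-component decomposition would no longer cleanly isolate minimal pieces.
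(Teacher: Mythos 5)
Your proof is correct, but it takes a genuinely different route from the paper's. The paper argues by induction on $|U+V|$: it picks some minimal synchronisation $(U',V')\subseteq(U,V)$, uses the boundary-disjointness hypothesis to show that the complement $(U\setminus U',\,V\setminus V')$ is again a synchronisation, and recurses. You instead give a direct, non-inductive construction: the connected components of the bipartite graph on $U\sqcup V$ (with $u$ adjacent to $v$ iff $\target{u}\cap\source{v}\neq\varnothing$) are exactly the minimal synchronisations, and they partition $U$ and $V$ by construction. Both arguments hinge on the same hypothesis --- at most one transition per boundary port on each side --- which you use to match each shared port to a unique $u$ and a unique $v$, and which the paper uses to show $\target{U'}\cap\target{(U\setminus U')}=\varnothing$. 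Your version buys an explicit characterisation of the minimal synchronisations (and hence of the transitions of $N\comp M$) and avoids the small implicit step in the paper's proof that a non-minimal non-trivial synchronisation contains a minimal one; the paper's induction is shorter on the page. Your handling of the edge cases (isolated vertices with empty target or source, and the frontier-edge argument for minimality) is sound, and the mutual independence of the family does follow, as you say, from the $U_i$ (resp.\ $V_i$) being pairwise disjoint subsets of the MI set $U$ (resp.\ $V$).
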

\begin{proof}
We argue by induction on $|U+V|$. If $(U,V)$ is minimal then the singleton family $\{(U,V)\}$ satisfies the requirements. Otherwise there exists a minimal synchronisation $(U',V')\subseteq (U,V)$. Now since there is at most one transition connected to each point on the boundary, we have $\target{U'}\cap \target{(U\backslash U')}=\varnothing$ and, similarly, $\source{V'}\cap\target{(V\backslash V')}=\varnothing$. Since $\target{U}=\source{V}$, we must also have $\target{(U\backslash U')}=\source{(V\backslash V')}$ and thus $(U\backslash U',V\backslash V')$ is a synchronisation. By the inductive hypothesis, there exists a mutually independent family $\{(U_i, V_i)\}_{i\in I}$, and so $\{(U',V')\}\cup \{(U_i, V_i)\}_{i\in I}$ fulfils the requirements.  \qed
\end{proof}

\paragraph{Proof of Theorem~\ref{thm:compositionality}.}
\begin{proof}
($\Rightarrow$) If $\marking{N\comp M}{X\uplus Y} \dtrans{\alpha}{\beta} \marking{N\comp M}{X'\uplus Y'}$ then there exists mutually independent set of minimal synchronisations $W\subseteq \minsynch{N}{M}$ with $\source{W}=\alpha$ and $\target{\alpha}=\beta$. Consider $U\Defeq \bigcup_{(X,Y) \in W} X \subseteq T_N$ and $V\Defeq \bigcup_{(X,Y)\in W} Y \subseteq T_M$. Since each $(X,Y)\in W$ is a synchronisation, we have $\target{X}=\source{Y}$ and so $\target{U}=\source{V}$. By definition, in each $(X,Y)\in W$, $X$ and $Y$ are mutually independent in, respectively, $N$ and $M$. Since $W$ is mutually independent, if $(X,Y)\neq (X',Y')\in W$ we have $\preandpost{(X,Y)}\cap \preandpost{(X',Y')}=\varnothing$, so $(\preandpost{X}+\preandpost{Y}) \cap  (\preandpost{X'}+\preandpost{Y'}) = \varnothing$ and thus both $\preandpost{X}\cap \preandpost{X'} =\varnothing$ and $\preandpost{Y}\cap \preandpost{Y'}=\varnothing$. It follows that $U$ and $V$ are mutually independent in $N$ and $M$, respectively, and letting $\gamma\Defeq \target{U}(=\source{V})$ we have  $\marking{N}{X}\dtrans{\alpha}{\gamma}\marking{N}{X'}$ and $\marking{M}{Y}\dtrans{\gamma}{\beta}\marking{M}{Y'}$ as required.

($\Leftarrow$) If  $\marking{N}{X}\dtrans{\alpha}{\gamma} \marking{N}{X'}$ and $\marking{M}{Y}\dtrans{\gamma}{\beta}\marking{M}{Y'}$ for some $\alpha\in\{0,1\}^k$, $\beta\in\{0,1\}^m$, $\gamma\in\{0,1\}^l$ then there exists mutually independent $U\subseteq T_N$ with $\source{U}=\alpha$, $\target{U}=\gamma$, and mutually independent $V\subseteq T_M$ with $\source{V}=\gamma$, $\target{V}=\beta$. In particular, $(U,V)$ is a synchronisation and so, using the conclusion of Lemma~\ref{lem:minimalDecomposition}, there exists a mutually independent family $\{(U_i,V_i)\}_{i\in I}$ of minimal synchronisations with $\bigcup_i U_i=U$ and $\bigcup_i V_i =V$. This family witnesses the transition $\marking{N\comp M}{X\uplus Y}\dtrans{\alpha}{\beta} \marking{N\comp M}{X'\uplus Y'}$. \qed
\end{proof}

\paragraph{Proof of Corollary~\ref{cor:traces}.}
\begin{proof}
Simple induction on $p$, using the conclusion of Theorem~\ref{thm:compositionality}. \qed
\end{proof}

\begin{lemma}\label{lem:decompWeakTrace}
Suppose that $N:k\to l$ and $M:l\to m$ are nets with boundaries. If there is a trace \[ \marking{N\comp M}{X\uplus Y} (\xrightarrow{\epsilon_{k,m}})^* \marking{N\comp M}{X'\uplus Y'}\]
 then there exists $p\in \N$, $\gamma_i\in\{0,1\}^l$ for $1\leq i\leq p$ and traces
\[
\marking{N}{X} \dtrans{0^k}{\gamma_1} \marking{N}{X_1}
\dots \dtrans{0^k}{\gamma_k} \marking{N}{X'}
\]
\[
\marking{M}{Y} \dtrans{\gamma_1}{0^m}\marking{M}{Y_1}
\dots \dtrans{\gamma_k}{0^m}\marking{M}{Y'}.
\]
\end{lemma}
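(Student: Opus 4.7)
The plan is to observe that this lemma is a direct specialisation of Corollary~\ref{cor:traces} to the case where every transition in the composite trace carries the trivial outer label $\epsilon_{k,m}=0^k/0^m$. First I would unfold the hypothesis: a trace
\[
\marking{N\comp M}{X\uplus Y} (\xrightarrow{\epsilon_{k,m}})^* \marking{N\comp M}{X'\uplus Y'}
\]
is, by definition, a sequence of $p$ transitions (for some $p\in\N$), each of the form $\dtrans{0^k}{0^m}$, composing to take $X\uplus Y$ to $X'\uplus Y'$ in $N\comp M$.

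Then I would apply Corollary~\ref{cor:traces} with the choice $\alpha_i = 0^k$ and $\beta_i = 0^m$ for every $1\leq i\leq p$. The corollary immediately yields the existence of labels $\gamma_1,\ldots,\gamma_p\in\{0,1\}^l$ on the shared boundary, together with intermediate markings $X_1,\ldots,X_{p-1}\subseteq P_N$ and $Y_1,\ldots,Y_{p-1}\subseteq P_M$, such that
\[
\marking{N}{X} \dtrans{0^k}{\gamma_1} \marking{N}{X_1} \dots \dtrans{0^k}{\gamma_p} \marking{N}{X'}
\quad\text{and}\quad
\marking{M}{Y} \dtrans{\gamma_1}{0^m} \marking{M}{Y_1} \dots \dtrans{\gamma_p}{0^m} \marking{M}{Y'},
\]
which is exactly the required conclusion.

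There is essentially no obstacle: the lemma is a purely mechanical consequence of the trace-form compositionality already established in Corollary~\ref{cor:traces}, which itself was obtained by induction on trace length from Theorem~\ref{thm:compositionality}. The only minor bookkeeping concerns the base case $p=0$, where the trace is empty, $X\uplus Y = X'\uplus Y'$ (which forces $X=X'$ and $Y=Y'$ by disjointness of the tagged unions), and the conclusion holds vacuously with no $\gamma_i$'s to supply. Alternatively, one could prove the lemma directly by induction on $p$, peeling off a single $\dtrans{0^k}{0^m}$ step with Theorem~\ref{thm:compositionality} to extract $\gamma_1$, and then invoking the inductive hypothesis on the remainder; but routing everything through Corollary~\ref{cor:traces} keeps the argument as a one-line instantiation.
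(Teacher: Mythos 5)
Your proof is correct and takes essentially the same route as the paper: the paper's own proof is exactly the ``induction on the length of the trace, using Theorem~\ref{thm:compositionality}'' that you mention as the alternative, and your one-line instantiation of Corollary~\ref{cor:traces} with $\alpha_i=0^k$, $\beta_i=0^m$ is just that same induction already packaged.
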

\begin{proof}
Induction on the length of the trace, using the conclusion of Theorem~\ref{thm:compositionality}.  \qed
\end{proof}

\paragraph{Proof of Theorem~\ref{thm:weakCompositionality}.}
\begin{proof}
(i) Suppose that $\marking{N\comp M}{X\uplus Y} \dTrans{\alpha}{\beta} \marking{N\comp M}{X'\uplus Y'}$ for some $\alpha\in\{0,1\}^k$ and $\beta\in\{0,1\}^n$. Then, by definition, there exist $X''\uplus Y''$, $X'''\uplus Y'''$ with
\[
\marking{N\comp M}{X\uplus Y} (\epstrans{k}{m})^*
\marking{N\comp M}{X''\uplus Y''} \dtrans{\alpha}{\beta}
\marking{N\comp M}{X'''\uplus Y'''} (\epstrans{k}{m})^*
\marking{N\comp M}{X'\uplus Y'}
\]
Now we use the conclusions of Lemma~\ref{lem:decompWeakTrace} and Theorem~\ref{thm:compositionality} to obtain the required traces.

(ii) If $\marking{N}{X}\dTrans{\alpha}{\gamma} \marking{N}{X'}$ and $\marking{M}{Y}\dTrans{\gamma}{\beta}\marking{M}{Y'}$ for some $\alpha\in\{0,1\}^k$, $\beta\in\{0,1\}^m$, $\gamma\in\{0,1\}^l$, then there exist $p_N,q_N,p_M,q_M\in\N$,  $X'',X'''\subseteq P_N$, $Y'',Y'''\subseteq P_M$ and traces
\[
\marking{N}{X} (\xrightarrow{\epsilon_{k,l}})^{p_N}
\marking{N}{X''}  \dtrans{\alpha}{\gamma} \marking{N}{X'''}
(\xrightarrow{\epsilon_{k,l}})^{q_N} \marking{N}{X'}
\]
\[
\marking{M}{Y} (\xrightarrow{\epsilon_{l,m}})^{p_M}
\marking{M}{Y''}  \dtrans{\gamma}{\beta} \marking{M}{Y'''}
(\xrightarrow{\epsilon_{l,m}})^{q_M} \marking{M}{Y'}
\]
Now, using the fact that each net in any marking can make $\epsilon$ transition and remain in the same marking (witnessing the firing of the empty set of transitions), we can use Theorem~\ref{thm:compositionality} to obtain:
\begin{multline*}
\marking{N\comp M}{X\uplus Y}
(\xrightarrow{\epsilon_{k,m}})^{p_N}
\marking{N\comp M}{X''\uplus Y}
(\xrightarrow{\epsilon_{k,m}})^{p_M} \\
\marking{N\comp M}{X''\uplus Y''}
\dtrans{\alpha}{\beta}
\marking{N\comp M}{X'''\uplus Y'''} \\
(\xrightarrow{\epsilon_{k,m}})^{q_N}
\marking{N\comp M}{X'\uplus Y'''}
(\xrightarrow{\epsilon_{k,m}})^{q_M}
\marking{N\comp M}{X'\uplus Y'}
\end{multline*}
and thus $\marking{N\comp M}{X\uplus Y}\dTrans{\alpha}{\beta}\marking{N\comp M}{X'\uplus Y'}$ as required. \qed
\end{proof}

\paragraph{Proof of Theorem~\ref{thm:correctness}.}
\begin{proof}
We prove this by structural induction on $t$. The base case, when $t$ is a variable, trivially holds. The interesting inductive case is $t\comp t'$. We must show that
$\epsmin{ \epsminpr{t}_{(\mathcal{V},\mathcal{F})}
       \comp \epsminpr {t'}_{(\mathcal{V},\mathcal{F})} }$ ($\dagger$) is isomorphic to
$\epsmin{ \NFA{ \semanticsOf{t\comp t'}_\mathcal{V} } { \mrk{t\comp t'}{\mathcal{I}} }
                  { \mrk{t\comp t'}{\mathcal{F}} } }$.
Using the definitions of $\semanticsOf{-}_\mathcal{V}$ and
$\mrk{-}{}$:
\begin{multline}\label{eq:simplification}
\epsmin{ \NFA{\semanticsOf{t\comp t'}_\mathcal{V}}
                         {\mrk{t\comp t'}{\mathcal{I}} }
                         {\mrk{t\comp t'}{\mathcal{F}} } }\\
=\epsmin{ \NFA{\semanticsOf{t}_\mathcal{V}\comp \semanticsOf{t'}_\mathcal{V}}
                           {\mrk{t}{\mathcal{I}} \uplus \mrk{t'}{\mathcal{I}}}
                           { \mrk{t}{\mathcal{F}} \uplus \mrk{t'}{\mathcal{F}} }
                }
\end{multline}
The inductive hypothesis gives us that
\begin{equation}\label{eq:ind1}
\epsminpr{t}_{(\mathcal{V},\mathcal{F})}
    \cong
        \epsmin{ \NFA{\semanticsOf{t}_\mathcal{V}}
                                 {\mrk{t}{\mathcal{I}} }
                                 {\mrk{t}{\mathcal{F}} }
                      }
\end{equation}
and
\begin{equation}\label{eq:ind2}
\epsminpr{t'}_{(\mathcal{V},\mathcal{F})}
    \cong
        \epsmin{ \NFA{\semanticsOf{t'}_\mathcal{V}}
                                 {\mrk{t'}{\mathcal{I}} }
                                 {\mrk{t'}{\mathcal{F}} } }
\end{equation}
Substituting~ \eqref{eq:ind1} and~\eqref{eq:ind2} in ($\dagger$), and using~\eqref{eq:simplification}, our task reduces to showing that:
\begin{multline}
\epsmin{
            \epsmin{ \NFA{\semanticsOf{t}_\mathcal{V}}
                                     {\mrk{t}{\mathcal{I}} }
                                     {\mrk{t}{\mathcal{F}} } } \comp
            \epsmin{ \NFA{\semanticsOf{t'}_\mathcal{V}}
                                     {\mrk{t'}{\mathcal{I}} }
                                     {\mrk{t'}{\mathcal{F}} } } }
\\ \cong
\epsmin{
            \NFA{\semanticsOf{t}_\mathcal{V}\comp \semanticsOf{t'}_\mathcal{V}}
                      {\mrk{t}{\mathcal{I}} \uplus \mrk{t'}{\mathcal{I}}}
                      {\mrk{t}{\mathcal{F}} \uplus \mrk{t'}{\mathcal{F}} } }
\end{multline}
To do this, it is sufficient to show that
\begin{equation} \label{eq:withepsmin}
\epsclose{
\epsmin{ \NFA{\semanticsOf{t}_\mathcal{V}}
                         {\mrk{t}{\mathcal{I}} }
                         {\mrk{t}{\mathcal{F}} } } \comp
\epsmin{ \NFA{\semanticsOf{t'}_\mathcal{V}}
                         {\mrk{t'}{\mathcal{I}} }
                         {\mrk{t'}{\mathcal{F}} } } }
\end{equation}
and
\begin{equation}\label{eq:otherone}
\epsclose{
\NFA{ \semanticsOf{t}_\mathcal{V}\comp \semanticsOf{t'}_\mathcal{V}}
         {\mrk{t}{\mathcal{I}} \uplus \mrk{t'}{\mathcal{I}} }
         {\mrk{t}{\mathcal{F}} \uplus \mrk{t'}{\mathcal{F}} }
}
\end{equation}
recognise the same language, where $\epsclose{-}$ means $\epsilon$-closure.
But~\eqref{eq:withepsmin} recognises the same language
as
\begin{equation}\label{eq:withoutepsmin}
\epsclose{
 \NFA{\semanticsOf{t}_\mathcal{V}}
          {\mrk{t}{\mathcal{I}} }
          {\mrk{t}{\mathcal{F}} }  \comp
 \NFA{\semanticsOf{t'}_\mathcal{V}}
           {\mrk{t'}{\mathcal{I}} }
           {\mrk{t'}{\mathcal{F}} } }
\end{equation}
and now the translation between paths in~\eqref{eq:otherone} and~\eqref{eq:withoutepsmin} follows directly from the conclusion of Theorem~\ref{thm:weakCompositionality}. \qed
\end{proof}

\begin{figure}
\[
\includegraphics[height=10cm]{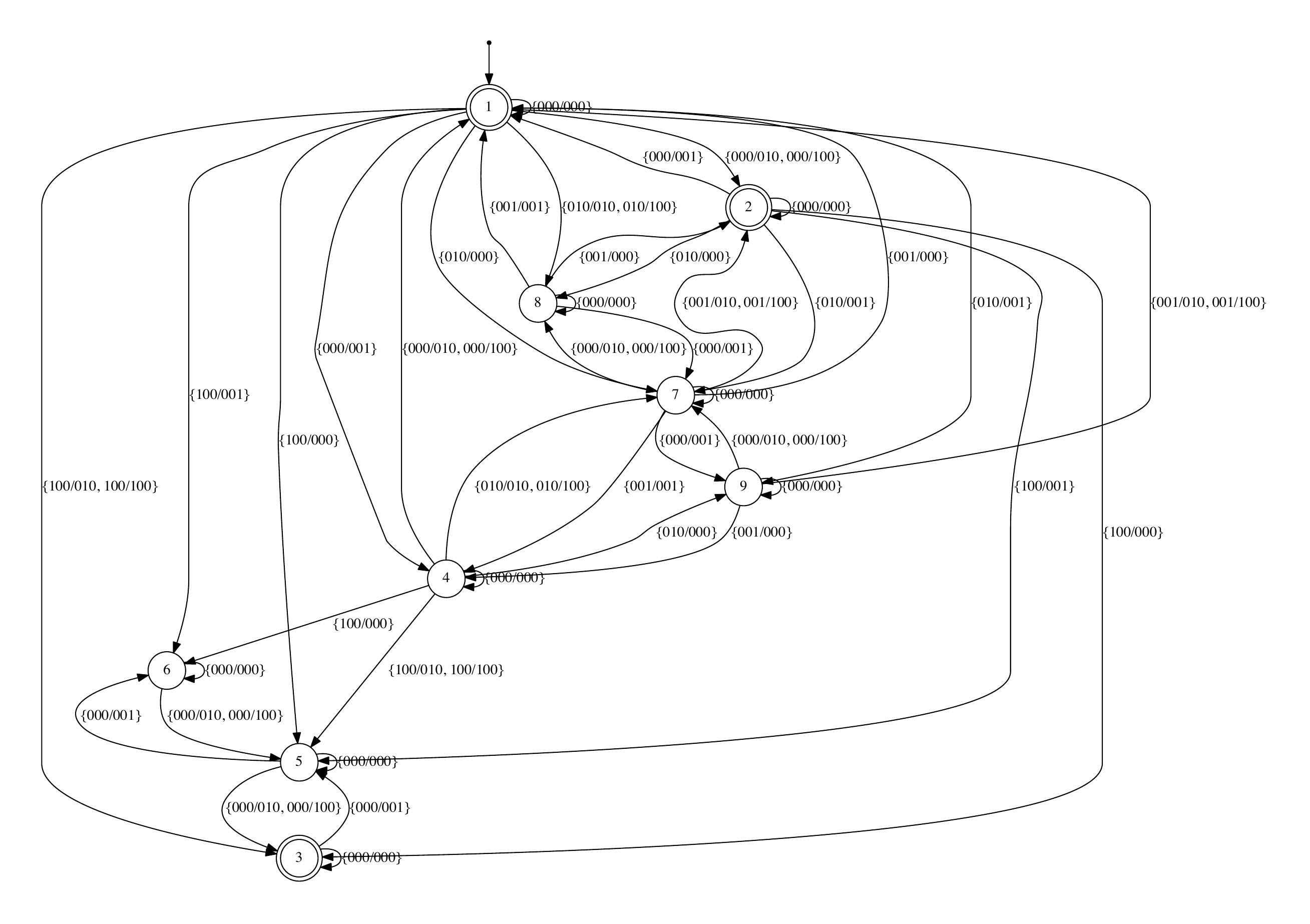}
\]
\caption{Fixed point reached at minimal DFA for $PhRow_3$, error state not drawn.\label{fig:philofixedpoint}}
\end{figure}

\end{document}